\documentclass[12pt,onecolumn,draftclsnofoot]{IEEEtran}

\usepackage{cite}
\usepackage{graphicx}

\usepackage{tikz}
\usetikzlibrary{decorations}
\usetikzlibrary{decorations.pathreplacing}
\usetikzlibrary{arrows,positioning,calc,fit,shapes.geometric}
\usetikzlibrary{arrows.meta}
\tikzset{>={Stealth[length=1.5mm, width=1.5mm]}}

\usepackage{xcolor}
\usepackage{amsfonts}
\usepackage{mathtools}
\newcommand{\mc}[1]{\mathcal{#1}}
\usepackage{mathrsfs}

\newcommand{\NN}{\mathbb{N}}

\newcommand{\PP}{\mathbb{P}}
\newcommand{\EE}{\mathbb{E}}

\usepackage{amsthm}
\newtheorem{theorem}{Theorem}
\newtheorem{proposition}{Proposition}

\newtheorem{remark}{Remark}
\newtheorem*{remark*}{Remark}


\DeclareMathOperator{\Marg}{Marg}

\usepackage[hidelinks]{hyperref}

\title{Single-Letterization of Rate-Cost Tradeoffs in Static Time-Invariant Control (STIC) Systems}
\author{Eray Unsal Atay, 
Venkat Chandrasekaran, 
Victoria Kostina 
\thanks{This work was supported in part by AFOSR grants \mbox{FA9550-23-1-0070} and FA9550-23-1-0204, by Caltech's Center for Sensing to Intelligence, and by the Carver Mead New Adventures Fund.}%
\thanks{The authors are with the California Institute of Technology, Pasadena, CA 91106 USA (email: eatay@caltech.edu; venkatc@caltech.edu; vkostina@caltech.edu).}%
}
\date{}

\begin{document}

\maketitle

\begin{abstract}
    In settings such as online recommendation platforms,
    medical decision-making with interim summaries,
    and stochastic control with static plants, the underlying system is often \emph{static}
    so that the
    next state depends only on the most recent action.
    Moreover,
    the true system state is high-dimensional and cannot be observed directly,
    so actions must be selected based on information received through a rate-limited communication link, which constrains the control performance.
    This leads to a fundamental rate-cost tradeoff---the minimum communication rate required to achieve a prescribed average control cost.
    We study this tradeoff for a class of static systems we term \emph{static time-invariant control} (STIC) systems, namely, control systems in which the next state depends on the history only through the most recent control action via a time-invariant kernel.
    For general control systems---not necessarily static or time-invariant---our prior work characterized this tradeoff through the limit of a sequence of directed-information minimization problems.
    However, the dimension of these problems grows with the time horizon, making the limiting formulation computationally intractable.
    In the present paper, we show for STIC systems that this growing-size formulation admits a single-letter characterization.
    This identifies STIC systems as the control-side counterpart of Noisy Output is the STate (NOST) channels---one of the few channel models whose feedback capacity admits a single-letter characterization.
    This yields, to our knowledge, the first constant-size convex characterization of the rate--cost tradeoff beyond linear quadratic Gaussian (LQG) control.
\end{abstract}

\begin{IEEEkeywords}
    Rate-limited control, rate-cost function,
    directed information,
    static control systems,
    convex optimization,
    single-letterization, symmetry reduction.
\end{IEEEkeywords}

\section{Introduction} \label{sec:intro}

Systems that evolve based on the most recent decision or action
are ubiquitous in contemporary applications.
In online review platforms, the currently displayed rating or numerical score shapes subsequent user choices~\cite{Tang2019Bandit, Mansour2020Bayesian}.
In medical decision-making with public interim summaries,
the latest disclosed efficacy estimate influences later treatment and participation decisions by physicians and patients~\cite{Mansour2020Bayesian}.
Despite the ubiquity of such structure, systems with this property are referred to by different names in different domains:
in reinforcement learning, as a multi-armed bandit~\cite{slivkins2019introduction, lattimore2020bandit}
or a Markov decision process (MDP) with state-independent transitions~\cite{FloresHernandez2013Monotone, Kallenberg2013DermanLP};
in statistics, as a controlled i.i.d.\ process~\cite{Shimkin1993Extremal};
in game theory and multi-agent learning, as a stateless game~\cite{busoniu2010multiagent};
and in stochastic control, as a static (or memoryless) plant~\cite{kosut1998run, LipsaMartins2011, Blanchini2017ModelFreePlantTuning}.
While the terminology may vary, the underlying models share the same basic feature that the next state depends only on the most recent action;
formally, if $X_t$ denotes the system state and $U_t$ denotes the action at time $t$,
and we write $X_{[t]} \coloneq ( X_1,\ldots,X_t )$ and $U_{[t]} \coloneq ( U_1,\ldots,U_t )$,
then $X_{t+1}$ depends on the history $\left( X_{[t]} , U_{[t]} \right)$ only through $U_t$.
Equivalently,
\begin{equation} \label{eq:static}
    X_{t+1} \perp \left( X_{[t]} , U_{[t-1]} \right) \! \ \big| \ U_t .
\end{equation}
In this paper, we will call such systems \emph{static}.
Figure~\ref{fig:dep0} illustrates~\eqref{eq:static} by comparing
the state-dependence structure of a general (non-static) system with that of a static system.

In stochastic control, one typically
seeks to
choose a causal control policy
to minimize the
average per-stage cost~\cite{KostinaHassibi2019RateCostControl, KaraRaginskyYuksel2022Robustness}.
For a finite horizon $n$, the corresponding optimization problem can be written as
\begin{equation} \label{eq:classical-control}
    \inf_{ \left\{P_{U_t \mid X_{[t]}, U_{[t-1]}}\right\}_{t=1}^n } \frac{1}{n}\sum_{t=1}^n \EE[c(X_t,U_t)]
\end{equation}
where $c(\,\cdot\, , \cdot\,)$ is a nonnegative cost function.
When the controller has direct access to the state, the problem reduces to a classical fully observed stochastic control problem.
For Markovian models, this is the standard average-cost MDP, characterized by Bellman optimality equations and tractable in finite-state finite-action or linear-quadratic settings~\cite{puterman1994markov,Bertsekas2012DP,AndersonMoore1990LQMethods}.
For general nonlinear or continuous-state systems, however, the Bellman equation is infinite-dimensional and the problem remains computationally challenging~\cite{HernandezLermaLasserre1996MCP,ChowTsitsiklis1989DPComplexity}.  This classic formulation no longer applies directly when the communication from the system to the controller is rate-limited;
in particular, in such situations, the controller does not observe the state $X_t$ directly, but instead selects the action $U_t$ based on an encoded description of the state~\cite{Tatikonda2000, nair2007feedback}.
Consequently, the achievable control performance is constrained by the available communication rate, leading to a fundamental tradeoff between rate and cost~\cite{TatikondaSahaiMitter2004_StochasticLinearControl, SilvaDerpichOstergaardEncina2016_MinAvgRate, TanakaEsfahaniMitter2018_LQGMinDI, KostinaHassibi2019RateCostControl}.
This tradeoff is captured by the \emph{rate-cost function}~\cite{YukselBasar2013AgreementTeams, ZhaoChiaWeissman2014CompressionActions, KostinaHassibi2018, NakahiraXiaoKostinaDoyle2018BiomolecularControl, KostinaHassibi2019RateCostControl}, namely, the minimum encoding rate required to satisfy a prescribed average cost constraint.
The rate-cost function in rate-limited control is the analog of the rate-distortion function in source coding~\cite{Shannon1959, Berger1971}.
For general causal control systems---not necessarily static or time-invariant---the rate-cost function is given by the minimum directed information subject to control cost~\cite{TanakaEsfahaniMitter2018_LQGMinDI, KostinaHassibi2019RateCostControl}.
Specifically,
\begin{equation} \label{eq:F-D}
    F(D)
    \coloneq
    \lim_{n\to\infty}
    \min_{\substack{
        \left\{ P_{U_t \mid X_{[t]},U_{[t-1]}} \right\}_{t=1}^n \colon \\
        \frac{1}{n}\sum_{t=1}^n \EE[c(X_t,U_t)] \le D
    }}
    \frac{1}{n} I\!\left(X_{[n]} \to U_{[n]}\right),
\end{equation}
where $I \! \left( X_{[n]} \to U_{[n]} \right) \coloneq \sum_{t=1}^n I \! \left( X_{[t]} ; U_t \, \big| \, U_{[t-1]} \right)$ denotes the directed information~\cite{Massey1990} from the system state vector to the action vector.
Compared with the classical control formulation in~\eqref{eq:classical-control},
instead of minimizing the average control cost directly,
the optimization in~\eqref{eq:F-D} minimizes the information flow from the state process to the action process subject to an average-cost constraint.
Thus, for each target cost level $D$, $F(D)$ quantifies the minimum
directed information required to achieve that performance level.
Since $F(D)$ is characterized as the limit of a sequence of growing-size optimization problems, it is generally intractable to compute.

\begin{figure}[t]
    \centering


    \def\unodewidth{0.8cm}
    \def\unodeheight{0.8cm}
    \def\xnodesize{0.90cm}
    \def\nodeinnersep{2pt}
    \def\nodefont{\scriptsize}
    \def\unodefill{blue!0}
    \def\xnodefill{gray!15}

    \def\urow{0}
    \def\xrow{-1.8}
    \def\colsep{1.8}
    \def\dotgap{1.0}
    \def\leftx{0}
    \def\diagramgap{3.5}
    \def\separatorpad{1.0}
    \def\separatorwidth{0.4pt}
    \def\xbend{30}

    \def\diagsourceNWshift{0.05cm}

    \makebox[\textwidth][c]{%
    \hspace*{-0.1cm}%
    \begin{tikzpicture}[
        >=Latex,
        u node/.style={
            draw=blue,
            text=blue,
            rounded corners,
            align=center,
            minimum width=\unodewidth,
            minimum height=\unodeheight,
            inner sep=\nodeinnersep,
            fill=\unodefill,
            font=\nodefont
        },
        x node/.style={
            draw,
            circle,
            align=center,
            minimum size=\xnodesize,
            inner sep=\nodeinnersep,
            fill=\xnodefill,
            font=\nodefont
        },
        dot node/.style={
            font=\nodefont
        },
        arrow line/.style={
            -Latex
        }
    ]

        \pgfmathsetmacro{\rightx}{\leftx + 3*\colsep + \diagramgap}
        \pgfmathsetmacro{\separatorx}{\leftx + 3*\colsep + \diagramgap/2}

        \def\diagSource#1{([xshift=-\diagsourceNWshift,yshift=\diagsourceNWshift]#1.south east)}


        \node[u node] (LUtm2) at ({\leftx},\urow) {$U_{t-2}$};
        \node[u node] (LUtm1) at ({\leftx+\colsep},\urow) {$U_{t-1}$};
        \node[u node] (LUt)   at ({\leftx+2*\colsep},\urow) {$U_t$};

        \node[x node] (LXtm1) at ({\leftx+\colsep},\xrow) {$X_{t-1}$};
        \node[x node] (LXt)   at ({\leftx+2*\colsep},\xrow) {$X_t$};
        \node[x node] (LXtp1) at ({\leftx+3*\colsep},\xrow) {$X_{t+1}$};

        \node[dot node] at ({\leftx-\dotgap},\urow) {$\cdots$};
        \node[dot node] at ({\leftx+\colsep-\dotgap},\xrow) {$\cdots$};
        \node[dot node] at ({\leftx+2*\colsep+\dotgap},\urow) {$\cdots$};
        \node[dot node] at ({\leftx+3*\colsep+\dotgap},\xrow) {$\cdots$};

        \draw[arrow line] \diagSource{LUtm2} -- (LXtm1);
        \draw[arrow line] (LUtm2) -- (LXt);
        \draw[arrow line] (LUtm2) -- (LXtp1);

        \draw[arrow line] \diagSource{LUtm1} -- (LXt);
        \draw[arrow line] (LUtm1) -- (LXtp1);

        \draw[arrow line] \diagSource{LUt} -- (LXtp1);

        \draw[arrow line] (LXtm1) -- (LXt);
        \draw[arrow line] (LXt) -- (LXtp1);

        \draw[arrow line] (LXtm1.south east) to[bend right=\xbend] (LXtp1.south west);


        \draw[line width=\separatorwidth]
            ({\separatorx},{\urow+\separatorpad})
            --
            ({\separatorx},{\xrow-\separatorpad});


        \node[u node] (RUtm2) at ({\rightx},\urow) {$U_{t-2}$};
        \node[u node] (RUtm1) at ({\rightx+\colsep},\urow) {$U_{t-1}$};
        \node[u node] (RUt)   at ({\rightx+2*\colsep},\urow) {$U_t$};

        \node[x node] (RXtm1) at ({\rightx+\colsep},\xrow) {$X_{t-1}$};
        \node[x node] (RXt)   at ({\rightx+2*\colsep},\xrow) {$X_t$};
        \node[x node] (RXtp1) at ({\rightx+3*\colsep},\xrow) {$X_{t+1}$};

        \node[dot node] at ({\rightx-\dotgap},\urow) {$\cdots$};
        \node[dot node] at ({\rightx+\colsep-\dotgap},\xrow) {$\cdots$};
        \node[dot node] at ({\rightx+2*\colsep+\dotgap},\urow) {$\cdots$};
        \node[dot node] at ({\rightx+3*\colsep+\dotgap},\xrow) {$\cdots$};

        \draw[arrow line] \diagSource{RUtm2} -- (RXtm1);
        \draw[arrow line] \diagSource{RUtm1} -- (RXt);
        \draw[arrow line] \diagSource{RUt} -- (RXtp1);

    \end{tikzpicture}%
    }
    \caption{
        Graphs describing the conditional state distributions $\left\{ P_{X_t \mid X_{[t-1]}, U_{[t-1]}} \right\}_{t\ge1}$ in general (non-static) and static systems.
        Directed black edges indicate causal dependencies governing the system states.
        Left: in a general system, $X_{t+1}$ may depend on the entire history
        $(X_{[t]},U_{[t]})$.
        Right: in a static system, $X_{t+1}$ depends on the history only through the most recent action
        $U_t$, as in~\eqref{eq:static}.
        Equivalently, $P_{X_t \mid X_{[t-1]}, U_{[t-1]}} = P_{X_t \mid U_{t-1}}$.
    }
    \label{fig:dep0}
\end{figure}

In this paper, we study the rate-cost function of
\emph{static time-invariant control} (STIC) systems,
i.e., control systems
for which there exists a kernel $Q_{X \mid U'}$ such that
\begin{equation} \label{eq:static-2}
    P_{X_{t+1} \mid X_{[t]}, U_{[t]}} = P_{X_{t+1} \mid U_t} = Q_{X \mid U'} \quad \forall t\ge1 .
\end{equation}
The static dependence structure has already been captured by \eqref{eq:static}
and illustrated in Figure~\ref{fig:dep0}.
The key additional simplification in the STIC model in~\eqref{eq:static-2} is \emph{time-invariance}:
the probability law governing the dependence of the next state $X_{t+1}$ on the most recent action $U_t$ does not vary with time $t$.
We derive a single-letter characterization of the rate-cost function of
STIC systems.
This yields the first characterization of the rate-cost tradeoff as a single-letter convex optimization problem beyond linear quadratic Gaussian (LQG) control~\cite{KostinaHassibi2018, KostinaHassibi2019RateCostControl, TanakaEsfahaniMitter2018_LQGMinDI}.
The resulting single-letter characterization also reveals a structural duality\footnote{The notion of duality that we refer to throughout the paper should not be confused with duality in mathematical optimization.} between STIC rate-cost minimization and the feedback-capacity formula for NOST channels~\cite{chen2005capacity, NOST},
therefore identifying STIC systems as the control analog of NOST channels.

Our setting of rate-limited control of STIC systems departs in two essential ways from the prior literature on systems of the form~\eqref{eq:static-2} in the context of other sequential decision-making problems in reinforcement learning~\cite{slivkins2019introduction, lattimore2020bandit} or MDPs~\cite{puterman1994markov}.
The first key distinction is the information structure;
the observation mechanism of the controller is not fixed or perfect
(as in standard MDP formulations~\cite{puterman1994markov}) and the actions must be taken based on information conveyed through a communication link of limited rate.
As a result, the communication architecture
and how much information about the state is made available to the controller
become part of the design problem.
This makes the problem fundamentally information-theoretic as
the object of interest is the tradeoff between communication rate and achievable control performance.
The second essential distinction is that the system dynamics in~\eqref{eq:static-2} are treated as known,
which isolates the information-theoretic cost of communication from the statistical problem of learning the dynamics;
when the kernel is known, the question is not one of learning or exploration (as in reinforcement learning~\cite{sutton2018reinforcement}),
but of determining the minimum information that must be transmitted to attain a prescribed control cost.
This assumption is natural in model-based control settings such as model predictive control (MPC), where the plant model is typically specified from first principles or identified offline from data~\cite{Rawlings2000,RawlingsMayneDiehl2017MPC,VerheijenBreschiLazar2023}.
In contrast, the sequential decision-making literature typically focuses on performance criteria such as expected cumulative cost~\cite{derman1962sequential, DufourPiunovskiy2013ExpectedTotalCost}
or regret under a fixed observation model~\cite{slivkins2019introduction, lattimore2020bandit},
rather than on optimizing the observation channel together with the control action policy under a rate constraint.
Consequently, although STIC systems share a transition structure that also appears in standard sequential decision-making models, the rate-limited control problem shifts the central question from choosing actions under a fixed information pattern to determining how much information must be communicated to achieve a desired level of control performance.

\begin{remark}[Higher-order memory] \label{rem:higher-order-memory}
    The STIC framework also captures fixed finite action memory.
    Specifically, consider \emph{$k$-Markov static time-invariant control systems}, namely systems satisfying
    \begin{equation}
        P_{X_{t+1}\mid X_{[t]},U_{[t]}}
        =
        P_{X_{t+1}\mid U_t,\ldots,U_{t-k+1}}
        =
        Q_{X \mid \overline U'}
        \qquad \forall t\ge 1 .
    \end{equation}
    Define the augmented action
    \begin{equation}
        \widetilde U_t \coloneq (U_t,\ldots,U_{t-k+1})
        \in \mc U^k .
    \end{equation}
    Then the dynamics become
    \begin{equation}
        P_{X_{t+1}\mid X_{[t]},\widetilde U_{[t]}}
        =
        P_{X_{t+1}\mid \widetilde U_t},
    \end{equation}
    so the system is a STIC system of the form~\eqref{eq:static-2} over
    the augmented action alphabet $\mc U^k$ with kernel
    $Q_{X \mid \overline U'}$.
    The corresponding single-stage cost function is
    \begin{equation}
        \widetilde c(x,\widetilde u)
        \coloneq
        c(x,u_1),
        \qquad
        \widetilde u=(u_1,\ldots,u_k).
    \end{equation}
\end{remark}

\paragraph*{Notation}
Random variables are denoted by uppercase letters (e.g.,~$X,Y$), and their realizations by lowercase letters (e.g.,~$x,y$).
Calligraphic letters denote the alphabets the random variables lie in (e.g.,~$\mc X,\mc Y$).
For random variables $X$ and $Y$, $P_{X,Y}$ denotes the joint distribution of $(X,Y)$,
and $P_{X \mid Y=y}$ the conditional distribution of $X$ given $Y=y$.
Independence is denoted by $X\perp Y$.
All information quantities are measured in bits,
and all logarithms are base~$2$.
The length of a binary string $w \in \{0,1\}^\star$ is denoted by $\ell(w) \in \NN$.

\subsection{Formal problem setup}

In
the stochastic control system
depicted in Figure~\ref{fig:sys1},
at each time $t$, the controller does not observe the system state $X_t \in \mc X$ directly but instead receives an encoded finite binary string $B_t \in \mc B_t \subset \{0,1\}^\star$ from a prefix-free codebook $\mc B_t$ of binary strings.
Based on this, the controller chooses a control action $U_t \in \mc U$, which incurs a per-stage control cost $c(X_t,U_t)$.
We assume that the state and action alphabets $\mc X, \mc U$ are finite.

The state process is governed by a given sequence of known conditional distributions\linebreak
$\bigl\{ P_{X_t \mid X_{[t-1]},U_{[t-1]}} \bigr\}_{t=1}^n$.
The communication--control architecture is then described by two sequences of design kernels:
\begin{itemize}
    \item the encoder kernels
    $\left\{ P_{B_t \mid X_{[t]},B_{[t-1]}} \right\}_{t=1}^n$,
    where each kernel generates a message $B_t$ based on the observed states $X_{[t]}$ and the past encoded messages $B_{[t-1]}$;
    
    \item the control kernels
    $\left\{ P_{U_t \mid B_{[t]},U_{[t-1]}} \right\}_{t=1}^n$,
    where each kernel selects the control action using the received messages together with the past control inputs.
\end{itemize}
Combined with the system dynamics $\left\{ P_{X_t \mid X_{[t-1]},U_{[t-1]}} \right\}_{t=1}^n$,
these design kernels determine the joint distribution of $\left( X_{[n]}, B_{[n]}, U_{[n]} \right)$.

\begin{figure}[!t]
    \centering

    \begin{tikzpicture}[
        >=Latex,
        node distance=15mm and 15mm,
        block/.style={
            draw, rounded corners,
            align=center,
            minimum height=10mm, inner sep=4pt, fill=white
        },
        plant/.style={block, fill=gray!15, minimum width=2.0cm, minimum height=1.2cm},
        enc/.style={block, minimum width=2.0cm, minimum height=1.2cm},
        ctrl/.style={block, minimum width=2.0cm, minimum height=1.2cm},
        line/.style={-Latex}
    ]

    \node[plant] (source) {System};
    \node[enc, right=of source] (encoder) {Encoder};
    \node[ctrl, right=of encoder] (controller) {Controller};

    \draw[line] (source.east) -- node[above] {$X_t$} (encoder.west);
    \draw[line] (encoder.east) -- node[above] {$B_t$} (controller.west);
    \draw[line]
          (controller.south) -- ++(0,-0.8) coordinate (ctrlbottom)
          -- (source |- ctrlbottom) coordinate[midway] (mid)
          -- (source.south);

    \node[below] at (mid) {$U_t$};
    \node[below=1.5em] at (mid) {$\text{per-stage control cost } c(X_t,U_t)$};

    \end{tikzpicture}
    
    \caption{Discrete-time control system with a rate-limited communication link to the controller.
    The shaded system block is fixed and known; the white encoder and controller blocks are design components.}
    \label{fig:sys1}
\end{figure}

We are interested in the minimum average rate
\begin{equation} \label{eq:rate}
    \frac{1}{n}\sum_{t=1}^n \EE[\ell(B_t)]
\end{equation}
that makes
it possible to keep the average control cost~\eqref{eq:classical-control} below a desired threshold $D\ge0$
asymptotically as the time horizon $n \to \infty$.
There is a tradeoff between the encoding rate~\eqref{eq:rate} and
the minimal achievable control cost~\eqref{eq:classical-control}~\cite{TatikondaSahaiMitter2004_StochasticLinearControl, SilvaDerpichOstergaardEncina2016_MinAvgRate, TanakaEsfahaniMitter2018_LQGMinDI, KostinaHassibi2019RateCostControl},
called the fundamental \emph{rate-cost tradeoff} in rate-limited control.
This tradeoff is studied in~\cite{atay2026ratecost} via the information-theoretic \emph{rate-cost function}
$F(D)$ given in~\eqref{eq:F-D}.
$F(D)$ is a lower bound on the minimum achievable asymptotic average prefix-free coding rate under the cost constraint $D$~\cite{atay2026ratecost}.
Moreover,~\cite{atay2026ratecost} proves a matching achievability result up to additive logarithmic terms, so the operational minimum average rate and $F(D)$ differ by at most $O \bigl( \log F(D) \bigr)$.
Thus,
although not an exact operational characterization,
$F(D)$ is an information-theoretic benchmark that governs the rate-cost tradeoff.

Each minimization instance in the limit in~\eqref{eq:F-D} is convex~\cite{Charalambous2016DirectedInformation, atay2026ratecost},
but this alone does not make the formulation tractable.
Indeed, for horizon $n$, the design variables are the causal kernels
$P_{U_t \mid X_{[t]},U_{[t-1]}}$, $t=1,\ldots,n$,
whose number of free parameters grows on the order of
$\sum_{t=1}^n (|\mc U|-1)|\mc X|^t|\mc U|^{t-1}$,
and hence exponentially in the horizon $n$.
Moreover, $F(D)$ is not the value of any of the finite-dimensional convex programs, but the limit of the normalized minimum values.
Thus, solving the finite-horizon problems numerically does not yield a characterization of the infinite-horizon tradeoff, and
a constant-size characterization is important both computationally and conceptually.   In particular, such a description makes the fundamental limit $F(D)$ accessible through a single optimization problem whose dimension depends only on the alphabets rather than on the horizon.  Moreover, it identifies the stationary one-step object governing the optimal rate-cost tradeoff, makes the structure of optimal solutions more transparent, yields qualitative properties of the rate-cost function, and guides the design of encoder--controller architectures.
It is therefore of interest to identify classes of systems---specified by the state transition kernels $\left\{ P_{X_t \mid X_{[t-1]},U_{[t-1]}} \right\}_{t=1}^n$---for which $F(D)$ admits a constant-size characterization.

\subsection{Contributions}

For STIC systems defined in~\eqref{eq:static-2},
we establish the following single-letter characterization of the rate-cost function given by the growing-size limit in~\eqref{eq:F-D}
via symmetry reduction.

\begin{theorem} \label{thm:single-letter}
    The rate-cost function $F(D)$ of a STIC system
    is given by
    \begin{equation}
        \begin{aligned} \label{eq:single-letter}
            F(D)
            = & \min_{P_{U,X,U'}} I\!\left(X;U\mid U'\right) \\[-1ex]
            & \quad \textup{s.t.} \quad
            \begin{aligned}[t]
                \EE\!\left[c(X,U)\right] &\le D \, , \\[-2ex]
                P_U &= P_{U'} \, , \\[-2ex]
                P_{X\mid U'} &= Q_{X\mid U'} \, .
            \end{aligned}
        \end{aligned}
    \end{equation}
\end{theorem}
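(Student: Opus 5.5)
We prove the two inequalities separately, writing $G(D)$ for the right-hand side of~\eqref{eq:single-letter}.

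\emph{Converse} ($F(D)\ge G(D)$). Fix $n$ and any feasible causal kernels $\{P_{U_t\mid X_{[t]},U_{[t-1]}}\}$. By the chain rule, the $t$-th term of the directed information satisfies
\begin{equation}
I(X_{[t]};U_t\mid U_{[t-1]})\ge I(X_t;U_t\mid U_{[t-1]})\ge I(X_t;U_t\mid U_{t-1}).
\end{equation}
The second inequality is the key static step. By~\eqref{eq:static}, $X_t\perp U_{[t-2]}\mid U_{t-1}$. Hence
\begin{equation}
I(X_t;U_t\mid U_{[t-1]})=H(X_t\mid U_{t-1})-H(X_t\mid U_t,U_{[t-1]})\ge H(X_t\mid U_{t-1})-H(X_t\mid U_t,U_{t-1}).
\end{equation}
For $t=1$ we set $U_0$ to be a dummy variable, and that term is simply nonnegative.

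Next introduce a time-sharing index $T$ uniform on $\{2,\ldots,n\}$, and set $(U',X,U)=(U_{T-1},X_T,U_T)$. Since $P_{X_t\mid U_{t-1}}=Q_{X\mid U'}$ for every $t$ by time invariance~\eqref{eq:static-2}, the averaged kernel still satisfies $P_{X\mid U'}=Q_{X\mid U'}$. The conditional mutual information $I(X;U\mid U')$ is convex in $P_{U\mid X,U'}$ for a fixed $P_{X,U'}$, but here $P_{U'}$ varies with $t$. I would therefore use the identity $I(X;U\mid U')=I(X;U,U')-I(X;U')$ together with the decomposition $\sum_t I(X_t;U_t\mid U_{t-1})\ge n\, I(X;U\mid U',T)$. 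I would then check that $I(X;U\mid U',T)\ge I(X;U\mid U')$ when $X\perp T\mid U'$, which holds because the kernel is time-invariant: the chain rule gives $I(X;U,T\mid U')=I(X;U\mid U')+I(X;T\mid U,U')\ge I(X;U\mid U')$, and also $I(X;U,T\mid U')=I(X;T\mid U')+I(X;U\mid U',T)=I(X;U\mid U',T)$.

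The cost averages to at most $D+O(1/n)$, since the first stage is dropped and the costs are bounded on finite alphabets. The marginals satisfy $\|P_U-P_{U'}\|_{TV}\le 2/n$, because these are averages of $P_{U_t}$ over shifted windows. The remaining issue is the exact constraint $P_U=P_{U'}$. I would handle it by showing that $G$ is continuous in a relaxed form, that is, with $P_U=P_{U'}$ replaced by $\|P_U-P_{U'}\|\le\epsilon$ and $D$ replaced by $D+\epsilon$, and that this relaxed value tends to $G(D)$ as $\epsilon\to0$ by compactness and lower semicontinuity of mutual information on finite alphabets. Continuity of $G$ in $D$ follows from convexity; the boundary case $D=D_{\min}$ needs care and can be handled via right-continuity arguments. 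This limiting step is where I expect the main technical obstacle.

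\emph{Achievability} ($F(D)\le G(D)$). Take an optimizer $P_{U,X,U'}$, so $P_{U\mid X,U'}$ is fixed. Use the Markov policy $P_{U_t\mid X_{[t]},U_{[t-1]}}=P_{U\mid X,U'}(\cdot\mid X_t,U_{t-1})$, with $U_0$ drawn from $P_{U'}$ and $X_1\sim Q_{X\mid U'}(\cdot\mid U_0)$. Then $U_{t-1}\mapsto U_t$ is a Markov chain with kernel $\sum_x Q(x\mid u')P(u\mid x,u')$, and $P_U=P_{U'}$ says it is stationary. Hence every triple $(U_{t-1},X_t,U_t)$ has law $P_{U',X,U}$, and the cost is at most $D$. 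Under this policy $U_t$ depends only on $(X_t,U_{t-1})$, and $X_t$ depends on the past only through $U_{t-1}$. Therefore $I(X_{[t]};U_t\mid U_{[t-1]})=I(X_t;U_t\mid U_{t-1})$, and the directed information equals $nI(X;U\mid U')$, up to the initialization, which is handled by including $U_0$ as a stationary prior (an $O(1/n)$ effect). Taking limits gives $F(D)\le G(D)$. That the minimum in~\eqref{eq:single-letter} is attained and the problem is convex follows from compactness of the feasible set, linearity of the constraints in $P_{U,X,U'}$, and convexity of $I(X;U\mid U')$ in $P_{U\mid X,U'}$ for fixed $P_{X,U'}$.
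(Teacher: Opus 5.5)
\textbf{The gap is in the achievability step, at initialization.} The initial law $X_1\sim P_1$ is part of the given system. You therefore cannot set $X_1\sim Q_{X\mid U'}(\cdot\mid U_0)$ with $U_0\sim P_{U'}$, and the effect of initialization is not automatically $O(1/n)$.

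Suppose you run your Markov policy from the true $X_1$, with a dummy $U_0$. Then in general $P_{U_1}\neq P_U$, and this error is propagated by the action chain with kernel $\sum_x Q_{X\mid U'}(x\mid u')\,P_{U\mid X,U'}(u\mid x,u')$. When that chain has several stationary laws, $P_{U_t}$ need not return to $P_U$, and the average cost and rate need not be within $O(1/n)$ of those of $P_{U,X,U'}$.

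The correct fix is what Proposition~\ref{prop:min-over-PX1} does: draw $U_1\sim P_U$ independently of $X_1$. Then $I(X_1;U_1)=0$, and every triple $(U_{t-1},X_t,U_t)$, $t\ge2$, has law $P_{U',X,U}$ exactly. However, the first-stage cost can now raise the average cost to $D+c_{\max}/n$, so the policy need not be feasible at level $D$. A continuity-in-$D$ argument is therefore needed on the achievability side. For example, run the optimizer for some $D'<D$, which is feasible at level $D$ once $n$ is large. Then let $D'\uparrow D$, using that the convex nonincreasing function $G$ is continuous at every $D$ strictly above the least cost $D_{\min}$ attainable in~\eqref{eq:single-letter}. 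The paper instead invokes right-continuity of $F$ at $D$.

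The boundary case you worried about belongs here, and at that point it cannot be repaired. Take $Q_{X\mid U'}(\cdot\mid u')=\delta_0$ for all $u'$, $P_1=\delta_1$ and $c(x,u)=\mathbf{1}\{x=1\}$. Every $n$-stage problem is infeasible at $D=0=D_{\min}$, whereas the right-hand side of~\eqref{eq:single-letter} equals $0$.

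\textbf{The converse is correct} and is a streamlined version of the paper's route. The paper first reduces to Markov policies and symmetrizes (Propositions~\ref{prop:simplification}--\ref{prop:cyclify}). You instead bound each directed-information term of an arbitrary causal policy by $I(X_t;U_t\mid U_{t-1})$, using $X_t\perp U_{[t-2]}\mid U_{t-1}$. You then time-share the triples over $t=2,\ldots,n$, which produces exactly the paper's average $\widetilde P^{(n)}_{U,X,U'}$.

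Your inequality $I(X;U\mid U',T)\ge I(X;U\mid U')$ under $X\perp T\mid U'$ is an elementary proof of the Jensen step. The paper takes that step from convexity of $I(X;U\mid U')$ on $\{P_{X\mid U'}=Q_{X\mid U'}\}$. This inequality, not convexity in $P_{U\mid X,U'}$ for fixed $P_{X,U'}$, is also what makes~\eqref{eq:single-letter} a convex program in $P_{U,X,U'}$.

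Your relaxed-constraint limit is exactly the paper's $\mc H_{\varepsilon}\downarrow\mc H_0$ compactness argument. Since that argument already absorbs the $D+\varepsilon$ relaxation, no continuity of $G$ in $D$ is needed on the converse side.

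One slip: the time-shared sum is $(n-1)\,I(X;U\mid U',T)$, not $n\,I(X;U\mid U',T)$. This only costs a factor $(n-1)/n$.
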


\noindent
Theorem~\ref{thm:single-letter} shows that, for STIC systems, the causal rate-cost function admits a convex single-letter characterization in terms of the minimal conditional mutual information between the system state and the control action, given the previous action.
This yields, to our knowledge, the first single-letter convex optimization characterization of the rate-cost tradeoff for a class of control systems beyond LQG control~\cite{KostinaHassibi2018, KostinaHassibi2019RateCostControl, TanakaEsfahaniMitter2018_LQGMinDI}.

Our proof methodology is as follows.
We leverage time-invariance (stationarity) to obtain a sequence of convex programs that are invariant under cyclic shifts of the time index, i.e., \emph{symmetric}.
We then build on the symmetry reduction argument~\cite[Ex.~4.4]{Boyd2004convex}
in the spirit of the methodology developed in~\cite{Atay2026PosetMarkov} for obtaining constant-size reductions---there, for the feedback capacity of NOST channels (see the subsequent discussion for more on the relation between the NOST channels and STIC systems).
This yields a reduced convex optimization problem for each time horizon;
the single-letter formula for $F(D)$ is then obtained in the limit as the horizon tends to infinity.
We emphasize that our method does not turn each finite-horizon problem into a constant-size one.
Rather, symmetry reduction applies in the asymptotic limit and shows that the growing-size sequence defining $F(D)$ has an equivalent single-letter characterization.
A notable methodological point is that the symmetry-reduction approach we developed in~\cite[Sec.~2]{Atay2026PosetMarkov} was applied to NOST channels in an unconstrained input setting since the setting did not involve an input-cost or power constraint,
whereas the present rate-cost problem includes the average-cost constraint $\EE[c(X,U)] \le D$.
Thus, the present work shows that the symmetry-reduction methodology extends beyond unconstrained channel-capacity problems to broader input-constrained information-theoretic problems.

The single-letter rate-cost characterization of STIC systems in Theorem~\ref{thm:single-letter} has a close counterpart on the channel-coding side, namely the feedback capacity of the NOST channel model studied in~\cite{chen2005capacity, NOST} and in~\cite[Sec.~2]{Atay2026PosetMarkov}.
NOST channels have one-step memory and are specified by a fixed conditional law $Q_{Y\mid X,Y'}$.
The NOST feedback capacity can be written as the concave maximization problem
\begin{equation} \label{eq:nost-c}
        \begin{aligned}
            C^\textup{fb}_{\mathrm{NOST}}
            \, = \,
            & \max_{P_{X , Y'}}
            I \! \left( X;Y \mid Y' \right) \\[-1ex]
            & \ \ \textup{s.t.} \ \ P_Y = P_{Y'}
    \end{aligned}
\end{equation}
where concavity follows from the fact that, for fixed $P_{Y\mid X,Y'}$, the functional $I \! \left( X;Y \mid Y' \right)$ is concave in the joint distribution $P_{X, Y'}$.
In Table~\ref{table:nost-rcf},
we highlight a form of \emph{duality} between
the single-letter characterization of the feedback capacity of NOST channels in~\cite[Sec.~2]{Atay2026PosetMarkov} and that of the rate-cost function of STIC systems characterized by~\eqref{eq:static-2}.
This provides additional insight into the notion of duality between the source and channel coding problems that was studied previously for memoryless i.i.d.\ sources and channels~\cite{pradhan2003duality,cover2002duality}.

\begin{table}[ht]
    \caption{Duality between NOST channel feedback capacity and STIC system rate-cost function.}
    \label{table:nost-rcf}
    \centering
    \renewcommand{\arraystretch}{1.5}
    \begin{tabular}{c|c|c}
                                    & NOST channel feedback capacity                     & STIC system rate-cost function                         \\ \cline{1-3}
        Objective function          & $I(X ; Y \mid Y^\prime)$                  & $I(X ; U \mid U^\prime)$                  \\ \cline{1-3}
        Fixed kernel              & {\color{red} $P_{Y \mid X , Y^\prime}$}   & {\color{blue} $P_{X \mid U^\prime}$}      \\ \cline{1-3}
        Input kernel           & {\color{blue} $P_{X \mid Y^\prime}$}      & {\color{red} $P_{U \mid X , U^\prime}$}  \\ \cline{1-3}
        Constraint                  & $P_{Y} = P_{Y^\prime}$                    & $P_{U} = P_{U^\prime}$                    \\ \cline{1-3}
        Type of optimization        & Concave maximization                      & Convex minimization
    \end{tabular}
\end{table}

In the NOST channel setting, the channel law $P_{Y\mid X,Y'}$ is fixed and one designs the input distribution $P_{X\mid Y'}$ to maximize the information flow $I \! \left( X;Y \mid Y' \right)$, subject to the stationarity constraint $P_Y=P_{Y'}$.
In our control setting of STIC systems, the state kernel $P_{X\mid U'} = Q_{X\mid U'}$ is fixed and one designs the control strategy $P_{U\mid X,U'}$ to minimize $I \! \left( X;U\mid U' \right)$, while meeting the cost constraint $\EE[c(X,U)]\le D$ and the stationarity constraint $P_U=P_{U'}$.
Thus, the two formulations mirror each other: the fixed kernel and the optimized kernel swap roles, and concave maximization in the channel problem becomes convex minimization in the control problem, while the same cyclic-shift invariance leads to the same type of constant-size reduction via convexity.
This viewpoint identifies STIC systems as the control analog of NOST channels.

\subsection{Related work}

\paragraph*{(Sequential) source coding}
In source coding, the rate-distortion function characterizes the minimum rate needed to represent a source within a prescribed average distortion. For a general source, this quantity is given by a limit of growing-size convex programs as
\begin{equation} \label{eq:rd-classical}
    F(D) = \lim_{n\to\infty} \frac1n \inf_{\substack{ P_{\hat X_{[n]} \mid X_{[n]}} \colon \\
    \frac{1}{n} \sum_{t=1}^n \EE\left[d\left(X_{t}, \hat X_{t}\right)\right] \le D }} I \! \left( X_{[n]} ; \hat X_{[n]} \right)
\end{equation}
in the classical block coding setting~\cite{Berger1971, CoverThomas2006}.
Single-letter
characterizations of the limit in~\eqref{eq:rd-classical} are known for memoryless i.i.d.\ sources~\cite{Shannon1959, Berger1968},
and for several Gaussian families under mean-square distortion,
including stationary Gaussian sources via the Shannon--Kolmogorov reverse-water-filling formula~\cite{Kolmogorov1956},
the discrete-time Wiener process~\cite{Berger1970Wiener},
and certain nonstationary Gaussian autoregressive sources~\cite{Gray1970AR,HashimotoArimoto1980}.
The sequential setting differs from the classical one in that each $\hat X_t$ must be produced before future source symbols are revealed, which models zero-delay decoding~\cite{NeuhoffGilbert1982, Yang2011}.
The relevant information measure becomes directed information~\cite{Massey1990},
and the associated \emph{sequential (causal)} rate-distortion function is given by~\cite{Tatikonda2000}
\begin{equation} \label{eq:rd-causal}
    F^{c}(D) = \lim_{n\to\infty} \frac1n \inf_{\substack{ \left\{ P_{\hat X_{t} \mid X_{[t]}, \hat X_{[t-1]}} \right\}_{t=1}^n \colon \\
    \frac{1}{n} \sum_{t=1}^n \EE\left[d\left(X_{t}, \hat X_{t}\right)\right] \le D}} I \! \left( X_{[n]} \to \hat X_{[n]} \right) .
\end{equation}
This form of the directed-information minimization was suggested by earlier finite-horizon results:
\cite[Thm.~1]{ViswanathanBerger2000} established the two-stage case, and~\cite[Thm.~1]{Yang2011} extended it to an arbitrary horizon $n$.
A more general model with $K$ observers with indirect access to the source was studied in~\cite{kostina2021ceo};
the special case $K=1$ with noiseless observations reduces to sequential source coding.
Closed-form or otherwise tractable characterizations of the limit in~\eqref{eq:rd-causal} are known only in special cases:
for scalar Gauss--Markov sources under quadratic distortion~\cite{KostinaHassibi2018};
for multidimensional Gauss--Markov sources via semidefinite programming~\cite{Tanaka},
and for a restricted class of stationary Markov sources with decoder feedforward~\cite{VenkataramananPradhan2010}.
These works study coding of sources that are not controlled, and therefore do not address control systems.

\paragraph*{Rate-limited control}
Control under communication constraints studies how much information sensors and controllers must exchange to achieve a desired control performance.
From an information-theoretic perspective, the control cost can be interpreted as a distortion measure, which leads to the rate-cost function~\cite{KostinaHassibi2018}.
This function denotes the minimum communication rate required to ensure that the control cost stays below a given threshold, thereby allowing rate-distortion theory to be applied to control problems~\cite{KostinaHassibi2018, KostinaHassibi2019RateCostControl, NakahiraXiaoKostinaDoyle2018BiomolecularControl}.
In the special case of LQG control, the rate-cost tradeoff is governed by $F(D)$ defined in~\eqref{eq:F-D}~\cite{KostinaHassibi2018, KostinaHassibi2019RateCostControl, TanakaEsfahaniMitter2018_LQGMinDI}. Moreover, for LQG systems, this quantity admits a more explicit characterization than~\eqref{eq:F-D}.
It is shown in~\cite{TanakaEsfahaniMitter2018_LQGMinDI} that among all control policies satisfying a given LQG cost constraint, the minimum directed information is achieved by a linear Gaussian policy that consists of a virtual sensor, a Kalman filter, and a certainty-equivalent controller. Furthermore, the asymptotic quantity $F(D)$ defined in~\eqref{eq:F-D} admits a single-letter semidefinite programming characterization,
and a closed-form expression for this semidefinite program in the scalar LQG case is given in~\cite{KostinaHassibi2019RateCostControl}.
Outside the LQG setting,~\cite{atay2026ratecost} demonstrated that the directed-information expression in~\eqref{eq:F-D} remains the fundamental quantity governing the rate-cost tradeoff in general nonlinear control systems, thereby identifying $F(D)$ in~\eqref{eq:F-D} as the rate-cost function in a broad sense. However, because~\eqref{eq:F-D} is defined through a growing-horizon limit, it is generally not amenable to a constant-size reduction. Prior to the present work, no class of control systems beyond LQG control was known to admit such a characterization.

\paragraph*{Static systems}
In control, system models satisfying~\eqref{eq:static-2}
are typically referred to as static (or memoryless) plants~\cite{kosut1998run, LipsaMartins2011, Blanchini2017ModelFreePlantTuning}.
Settings in which the next outcome or state is determined by the most recent action have also been studied outside the control literature.
Arguably the most prominent example is the multi-armed bandit problem~\cite{lattimore2020bandit, slivkins2019introduction}, whose modern treatment traces back to Robbins's formulation of sequential allocation under uncertainty~\cite{Robbins1952}.
In this literature, the counterpart of control cost is known as \emph{regret},
which is the gap between the reward earned by the policy and that of
the optimal decision.
Classical policies with regret guarantees include $\varepsilon$-greedy~\cite{auer2002finite, sutton2018reinforcement}, upper confidence bound (UCB)~\cite{auer2002finite, lai1985asymptotically}, and Thompson sampling~\cite{thompson1933likelihood, agrawal2012analysis, russo2018tutorial}.
The central difference between the multi-armed bandit setting and our control setting is that here the kernel~\eqref{eq:static-2} is assumed known and the focus is on the induced rate-cost tradeoff, whereas in the bandit setting the kernel is unknown and the goal is to minimize regret with respect to the optimal policy while learning it.

\paragraph*{Single-letter reductions}
Reducing growing-size optimization problems to constant-size ones is important in several areas, including extremal combinatorics~\cite{raymond2018symmetric,raymond2018symmetry,lovasz2012large,brosch_thesis}, quantum information~\cite{huber2021positive,klep2018positive,klep2022optimization}, and mean-field game theory~\cite{lasry2007mean}. In these settings, the underlying optimizations are group-invariant, and symmetry reduction often yields constant-size formulations~\cite{raymond2018symmetric,raymond2018symmetry,riener2013exploiting,levin2023free}.
In information theory,
this approach was used in~\cite[Sec.~2]{Atay2026PosetMarkov} to recover the convex single-letter formula for the feedback capacity of the NOST channel, originally proved in~\cite{NOST}.
The present work is the first to apply symmetry reduction to a class of control systems, yielding a constant-size characterization of the rate-cost function of STIC systems.
In the NOST channel setting~\cite[Sec.~2]{Atay2026PosetMarkov}, the optimization over the input strategy did not involve an additional input-cost or power constraint. The present work extends the utility of symmetry reduction by demonstrating that it can be used to single-letterize problems with input constraints.

\paragraph*{Channel--source coding duality}
Prior work has studied a duality between channel coding and lossy source coding by identifying channel capacity and rate-distortion formulas that become mirror images after an exchange of variables and constraints~\cite{pradhan2003duality,cover2002duality}. In the classical point-to-point setting, channel capacity fixes the channel law $P_{Y\mid X}$ and maximizes $I(X;Y)$ over the input law $P_X$, whereas rate-distortion fixes the source law $P_X$ and minimizes $I \! \left( X ; \hat X \right)$ over the reconstruction kernel $P_{\hat X\mid X}$ subject to a distortion constraint; thus the fixed and optimized kernels interchange, and maximization becomes minimization~\cite{pradhan2003duality,cover2002duality}.
This viewpoint was formalized
as a functional duality and was extended to coding problems with side information~\cite{pradhan2003duality}.
A unified treatment of channel capacity and rate-distortion with two-sided state information
was given in~\cite{cover2002duality},
showing that several classical formulas fit the same dual template.
Our comparison between the NOST channel capacity problem~\cite{NOST, Atay2026PosetMarkov} and the STIC system rate-cost problem
is in this same spirit.

\bigskip
The remainder of the paper is organized as follows.
In Section~\ref{sec:static}, we present the main steps yielding Theorem~\ref{thm:single-letter}.
In Section~\ref{sec:proofs},
we give the proof of Theorem~\ref{thm:single-letter} and full proofs of the intermediate propositions.
In Section~\ref{sec:numer}, we provide a numerical illustration of the single-letter characterization in Theorem~\ref{thm:single-letter}.
In Section~\ref{sec:dual}, we further elaborate on the duality between STIC systems and NOST channels by identifying dual families of kernels.

\section{Main Steps in Proof of Theorem~\ref{thm:single-letter}} \label{sec:static}

We prove Theorem~\ref{thm:single-letter} through a sequence of simplifications of the multi-letter rate-cost formulation in~\eqref{eq:F-D}.
Propositions~\ref{prop:simplification}--\ref{prop:cyclify} reduce the original growing-horizon problem to an equivalent cyclically symmetric formulation in the infinite-horizon limit.
Applying symmetry reduction to this formulation yields the single-letter convex characterization in~\eqref{eq:single-letter}.

For a general control system with arbitrary state kernels $\left\{ P_{X_t \mid X_{[t-1]},U_{[t-1]}} \right\}_{t=1}^n$,
an optimal action policy $\left\{P_{U_t \mid X_{[t]}, U_{[t-1]}}\right\}_{t=1}^n$ may have each $U_t$ depend on the full history $\left(X_{[t]},U_{[t-1]}\right)$, as illustrated in Figure~\ref{fig:dep1}.
Under additional Markov structure, however, this dependence can be reduced.
In particular, for STIC systems, the actions $U_t$ can---without loss of optimality---be restricted to depend only on the current state $X_t$ and the previous action $U_{t-1}$, rather than on the entire history $\left(X_{[t]},U_{[t-1]}\right)$.

\begin{figure}[t]
    \centering

    \begin{tikzpicture}[
            >=Latex,
            node distance=18mm and 22mm,
            block/.style={
                draw, rounded corners,
                align=center,
                minimum height=10mm, inner sep=4pt, fill=white
            },
            state/.style={block, minimum width=1.2cm, minimum height=1.2cm},
            ustate/.style={state, draw=blue, text=blue, fill=blue!0},
            line/.style={-Latex, blue}
        ]

        \node[ustate] (Utm1) {$U_{t-1}$};
        \node[ustate, right=of Utm1] (Ut) {$U_t$};
        \node[ustate, right=of Ut] (Utp1) {$U_{t+1}$};

        \node[state, circle, fill=gray!15, below=of Utm1] (Xtm1) {$X_{t-1}$};
        \node[state, circle, fill=gray!15, below=of Ut] (Xt) {$X_t$};
        \node[state, circle, fill=gray!15, below=of Utp1] (Xtp1) {$X_{t+1}$};

        \node[left=8mm of Utm1] {$\cdots$};
        \node[left=8mm of Xtm1] {$\cdots$};

        \node[right=8mm of Utp1] {$\cdots$};
        \node[right=8mm of Xtp1] {$\cdots$};

        \draw[line] (Xtm1) -- (Utm1);
        \draw[line] (Xtm1) -- (Ut);
        \draw[line] (Xtm1) -- (Utp1);

        \draw[line] (Xt) -- (Ut);
        \draw[line] (Xt) -- (Utp1);

        \draw[line] (Xtp1) -- (Utp1);

        \draw[line] (Utm1) -- (Ut);
        \draw[line] (Ut) -- (Utp1);

        \draw[line] (Utm1.north east) to[bend left=30] (Utp1.north west);
    \end{tikzpicture}

    \caption{
        Graph describing the conditional action distributions $\left\{P_{U_t \mid X_{[t]}, U_{[t-1]}}\right\}_{t\ge1}$ under a general causal action policy.
        Directed blue edges indicate causal dependencies governing the control actions.
        Each action $U_t$ may depend on the full history $\left(X_{[t]},U_{[t-1]}\right)$.
    }
    \label{fig:dep1}
\end{figure}

\begin{proposition}[History simplification] \label{prop:simplification}
    Each instance of minimization in~\eqref{eq:F-D} can be simplified as
    \begin{equation} \label{eq:simplification}
        \min_{\substack{ \left\{P_{U_t \mid X_{[t]}, U_{[t-1]}}\right\}_{t=1}^n \colon \\ \frac{1}{n} \sum_{t=1}^n \EE\left[c\left(X_t, U_t\right)\right] \le D}} \sum_{t=1}^n I \! \left( X_{[t]} ; U_t \mid U_{[t-1]} \right)
        =
        \min_{\substack{ \left\{P_{U_t \mid X_t, U_{t-1}}\right\}_{t=1}^n \colon \\ \frac{1}{n} \sum_{t=1}^n \EE\left[c\left(X_t, U_t\right)\right] \le D}} \sum_{t=1}^n I(X_t ; U_t \mid U_{t-1}),
    \end{equation}
    where the first summand on both sides is $I(X_1;U_1)$.
\end{proposition}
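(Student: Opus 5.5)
We prove the two inequalities separately; the right-hand side is a restriction of the left-hand side, so the real work is to show ``$\ge$''.

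\emph{Direction ``$\le$''.} Any Markov policy $\{P_{U_t\mid X_t,U_{t-1}}\}$ is a particular causal policy. It therefore induces the same cost under both formulations. It remains to check that the objectives coincide. Under a Markov policy and STIC dynamics, $X_{[t-1]}$ influences $U_t$ only through $U_{t-1}$: $X_t$ depends on the past only through $U_{t-1}$ by~\eqref{eq:static}, and $U_t$ depends only on $(X_t,U_{t-1})$. Hence $U_t \perp X_{[t-1]} \mid (X_t,U_{[t-1]})$, so the chain rule gives $I(X_{[t]};U_t\mid U_{[t-1]}) = I(X_t;U_t\mid U_{[t-1]})$. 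Moreover, $(X_t,U_t)$ is conditionally independent of $U_{[t-2]}$ given $U_{t-1}$. Therefore $I(X_t;U_t\mid U_{[t-1]}) = I(X_t;U_t\mid U_{t-1})$, and the right-hand side is at least the left-hand side.

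\emph{Direction ``$\ge$''.} Fix an arbitrary causal policy with joint law $P$.

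\begin{enumerate}
\item \emph{Lower-bound each summand.} By the chain rule,
\[
I(X_{[t]};U_t\mid U_{[t-1]}) \ge I(X_t;U_t\mid U_{[t-1]}).
\]
By~\eqref{eq:static}, $X_t \perp U_{[t-2]} \mid U_{t-1}$, i.e., $I(X_t;U_{[t-2]}\mid U_{t-1})=0$. Expanding $I(X_t;U_t,U_{[t-2]}\mid U_{t-1})$ in two ways gives
\[
I(X_t;U_t\mid U_{[t-1]}) = I(X_t;U_t,U_{[t-2]}\mid U_{t-1}) \ge I(X_t;U_t\mid U_{t-1}),
\]
where every mutual information is computed under $P$.

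\item \emph{Build a Markov policy.} Define $\tilde P_{U_t\mid X_t,U_{t-1}} \coloneq P_{U_t\mid X_t,U_{t-1}}$, the conditional marginal of the original joint law, and run it on the same STIC system.

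\item \emph{Match pairwise marginals by induction.} We show that $\tilde P_{U_{t-1},X_t,U_t} = P_{U_{t-1},X_t,U_t}$ for all $t$. The base case is $\tilde P_{X_1,U_1}=P_{X_1,U_1}$. For the inductive step, suppose $\tilde P_{U_{t-1}} = P_{U_{t-1}}$. Then $\tilde P_{X_t\mid U_{t-1}} = Q_{X\mid U'} = P_{X_t\mid U_{t-1}}$, which is exactly where staticity and time-invariance enter. Since the $U_t$-kernel is also matched, the triple law agrees, and in particular $\tilde P_{U_t}=P_{U_t}$.

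\item \emph{Conclude.} Each cost term $\EE[c(X_t,U_t)]$ and each $I(X_t;U_t\mid U_{t-1})$ depends only on the triple marginal. Hence the new Markov policy is feasible and attains objective at most the original one. Taking the infimum gives ``$\ge$''. For the minimizations, compactness and continuity over finite alphabets ensure minima exist.
\end{enumerate}

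The main obstacle is the step in item 1 that drops $U_{[t-2]}$ from the conditioning. Conditioning can in general increase mutual information, so this is not automatic. The bound rests entirely on the conditional independence $X_t\perp U_{[t-2]}\mid U_{t-1}$ supplied by~\eqref{eq:static}, which must hold under an arbitrary causal policy, not only Markov ones. I will verify it by factorizing the joint law: the system kernel for $X_t$ depends on the past only through $U_{t-1}$, regardless of the policy.
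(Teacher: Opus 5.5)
Your proposal is correct and follows the paper's proof essentially step for step. It uses the same two directions, the same induced kernel $P_{U_t\mid X_t,U_{t-1}}$ with an induction matching the triple marginals via $P_{X_t\mid U_{t-1}}=Q_{X\mid U'}$, and the same chain-rule and conditional-independence arguments.

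The only difference is cosmetic and lies in the ``$\ge$'' bound. You first drop $I(X_{[t-1]};U_t\mid X_t,U_{[t-1]})$ and then remove $U_{[t-2]}$ using $X_t\perp U_{[t-2]}\mid U_{t-1}$. The paper instead first drops $I(X_{[t-1]};U_t\mid U_{[t-1]})$ and then removes all of $(X_{[t-1]},U_{[t-2]})$ from the conditioning using $X_t\perp (X_{[t-1]},U_{[t-2]})\mid U_{t-1}$.
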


\begin{proof}[Proof sketch]
    We show that for any feasible history-dependent control law $\left\{ P_{U_t \mid X_{[t]},U_{[t-1]}} \right\}_{t=1}^n$,
    replacing it by the induced $\left\{ P_{U_t \mid X_t,U_{t-1}} \right\}_{t=1}^n$ causes no loss of optimality,
    proving the direction $\ge$.
    Conversely, any kernel of the form $P_{U_t \mid X_t,U_{t-1}}$ is feasible for the original problem.
    We show that under this choice, we have the equality
    \begin{equation}
        I \! \left( X_{[t]} ; U_t \, \big| \, U_{[t-1]} \right) = I(X_t ; U_t \mid U_{t-1}) \quad \forall t ,
    \end{equation}
    which proves the direction $\le$.
    The full proof is given in Section~\ref{subsec:proof-prop:simplification}.
\end{proof}

Proposition~\ref{prop:simplification} shows that for STIC systems, the actions designed to minimize directed information can be restricted to first-order Markov policies without loss of optimality.
The resulting first-order policy structure is shown in Figure~\ref{fig:dep1.5}: each action $U_t$ uses only the current state $X_t$ and the immediately preceding action $U_{t-1}$.
Combining the system state dependency in Figure~\ref{fig:dep0}
and the control action dependency in Figure~\ref{fig:dep1.5}
gives the overall dependency structure of STIC systems after history simplification,
depicted in Figure~\ref{fig:dep2}.

\begin{figure}[b]
    \centering

    \begin{tikzpicture}[
            >=Latex,
            node distance=18mm and 22mm,
            block/.style={
                draw, rounded corners,
                align=center,
                minimum height=10mm, inner sep=4pt, fill=white
            },
            state/.style={block, minimum width=1.2cm, minimum height=1.2cm},
            ustate/.style={state, draw=blue, text=blue, fill=blue!0},
            line/.style={-Latex, blue}
        ]

        \node[ustate] (Utm1) {$U_{t-1}$};
        \node[ustate, right=of Utm1] (Ut) {$U_t$};
        \node[ustate, right=of Ut] (Utp1) {$U_{t+1}$};

        \node[state, circle, fill=gray!15, below=of Utm1] (Xtm1) {$X_{t-1}$};
        \node[state, circle, fill=gray!15, below=of Ut] (Xt) {$X_t$};
        \node[state, circle, fill=gray!15, below=of Utp1] (Xtp1) {$X_{t+1}$};

        \node[left=8mm of Utm1] {$\cdots$};
        \node[left=8mm of Xtm1] {$\cdots$};

        \node[right=8mm of Utp1] {$\cdots$};
        \node[right=8mm of Xtp1] {$\cdots$};

        \draw[line] (Xtm1) -- (Utm1);
        \draw[line] (Xt) -- (Ut);
        \draw[line] (Xtp1) -- (Utp1);

        \draw[line] (Utm1) -- (Ut);
        \draw[line] (Ut) -- (Utp1);

    \end{tikzpicture}

    \caption{
        Graph describing the conditional action distributions $\left\{P_{U_t \mid X_{t}, U_{t-1}}\right\}_{t\ge1}$ under a history-simplified causal action policy.
        Each action $U_t$ depends on the history only through $(X_t,U_{t-1})$.
    }
    \label{fig:dep1.5}
\end{figure}

Combining Proposition~\ref{prop:simplification} with the definition of $F(D)$ in~\eqref{eq:F-D} yields the simplified expression
\begin{equation} \label{eq:simplified}
    F(D) =
    \lim_{n \to \infty}
    \min_{\substack{ \left\{P_{U_t \mid X_t, U_{t-1}}\right\}_{t=1}^n \colon \\ \frac{1}{n} \sum_{t=1}^n \EE\left[c\left(X_t, U_t\right)\right] \le D}}
    \dfrac{1}{n} \sum_{t=1}^n I(X_t ; U_t \mid U_{t-1}) .
\end{equation}

\begin{figure}[t]
    \centering

    \begin{tikzpicture}[
            >=Latex,
            node distance=18mm and 22mm,
            block/.style={
                draw, rounded corners,
                align=center,
                minimum height=10mm, inner sep=4pt, fill=white
            },
            state/.style={block, minimum width=1.2cm, minimum height=1.2cm},
            ustate/.style={state, draw=blue, text=blue, fill=blue!0},
            line/.style={-Latex},
            uline/.style={-Latex, blue}
        ]

        \node[ustate] (Utm1) {$U_{t-1}$};
        \node[ustate, right=of Utm1] (Ut) {$U_t$};
        \node[ustate, right=of Ut] (Utp1) {$U_{t+1}$};

        \node[state, circle, fill=gray!15, below=of Utm1] (Xtm1) {$X_{t-1}$};
        \node[state, circle, fill=gray!15, below=of Ut] (Xt) {$X_t$};
        \node[state, circle, fill=gray!15, below=of Utp1] (Xtp1) {$X_{t+1}$};

        \node[left=8mm of Utm1] {$\cdots$};
        \node[left=8mm of Xtm1] {$\cdots$};

        \node[right=8mm of Utp1] {$\cdots$};
        \node[right=8mm of Xtp1] {$\cdots$};

        \draw[uline] (Xtm1) -- (Utm1);
        \draw[uline] (Xt) -- (Ut);
        \draw[uline] (Xtp1) -- (Utp1);

        \draw[uline] (Utm1) -- (Ut);
        \draw[uline] (Ut) -- (Utp1);

        \draw[line] (Utm1) -- (Xt);
        \draw[line] (Ut) -- (Xtp1);
    \end{tikzpicture}

    \caption{
        Graph describing the conditional state distributions
        $\left\{ P_{X_t \mid U_{t-1}} \right\}_{t\ge1}$
        and conditional action distributions
        $\left\{P_{U_t \mid X_{t}, U_{t-1}}\right\}_{t\ge1}$
        for a STIC system after history simplification.
        Black and blue edges indicate dependencies governing the system states and control actions, respectively.
        Each system state $X_{t+1}$ depends on the history only through $U_t$;
        each action $U_t$ depends on the history only through $(X_t,U_{t-1})$.
    }
    \label{fig:dep2}
\end{figure}

The next step is to make the dependence on the initial distribution explicit.
Since the system is time-invariant and governed by the fixed kernel $Q_{X \mid U'}$,
it is natural to anticipate that
the long-term behavior of the controlled process would be insensitive to the choice of $P_{X_1}$.
The following proposition shows that, in the infinite-horizon limit, we may in fact minimize over $P_{X_1}$ without affecting $F(D)$.

\begin{proposition}[Minimization over $P_{X_1}$] \label{prop:min-over-PX1}
    The rate-cost function $F(D)$ can be expressed as
    \begin{equation} \label{eq:min-wrt-PX1}
        F(D) =
        \lim_{n \to \infty}
        \min_{\substack{ P_{X_1}, \, \left\{P_{U_t \mid X_t, U_{t-1}}\right\}_{t=1}^n \colon \\ \frac{1}{n} \sum_{t=1}^n \EE\left[c\left(X_t, U_t\right)\right] \le D}}
        \dfrac{1}{n} \sum_{t=1}^n I(X_t ; U_t \mid U_{t-1}) .
    \end{equation}
\end{proposition}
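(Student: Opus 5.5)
The plan is to sandwich the right-hand side of \eqref{eq:min-wrt-PX1} between two quantities that both converge to $F(D)$ as given in \eqref{eq:simplified}. Write $G_n(D;P_{X_1})$ for the $n$-horizon minimum in \eqref{eq:simplified} with the given initial law, and $\underline G_n(D)\coloneq\min_{P_{X_1}} G_n(D;P_{X_1})$ for the minimum in \eqref{eq:min-wrt-PX1}. The inequality $\underline G_n(D)\le G_n(D;P_{X_1})$ is trivial, so $\limsup_n \underline G_n(D)/n \le F(D)$. (Here I take the limit in \eqref{eq:F-D} as given to exist, as the paper does; otherwise I would work with $\liminf/\limsup$ throughout.) The substantive direction is the reverse: a policy that is optimized jointly with a favorable initial law should be convertible into a policy for the true $P_{X_1}$ with only an $o(n)$ penalty in directed information and an $o(1)$ penalty in average cost.

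For the reverse direction I would use a \emph{one-step prefix} construction that exploits the STIC structure. The key observation is that for $t\ge2$ the law of $X_t$ depends on the past only through $U_{t-1}$ via $Q_{X\mid U'}$. Hence the only quantity through which the initial law enters is the distribution of the first state, and after one step any initial law is reachable. Concretely, let $(P^\star_{X_1},\{P^\star_{U_t\mid X_t,U_{t-1}}\})$ be optimal for horizon $n$ in \eqref{eq:min-wrt-PX1}. I would choose any $\pi$ on $\mathcal U$ (for instance $\pi=P^\star_{U'}$-compatible), and for the true system at horizon $n+1$ play at time $1$ an action $U_1$ independent of $X_1$ with law $\pi$. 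This costs zero information and at most $c_{\max}$ in cost. The state $X_2$ then has law $\sum_u \pi(u)Q_{X\mid U'=u}$.

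This is the step I expect to be the main obstacle: $P^\star_{X_1}$ need not be of this form, since it may lie outside the convex hull of the rows of $Q_{X\mid U'}$. So I would first reduce \eqref{eq:min-wrt-PX1} to initial laws in that hull. The intended argument is that at horizon $n+1$ one can discard the first stage of the $P^\star$ solution and apply the inequality to the shifted process. Since $X_2$ given $U_1$ is $Q$-distributed, the tail policy from time $2$ onward is a feasible horizon-$n$ policy started from $P_{X_2}=\sum_u P_{U_1}(u)Q_{X\mid U'=u}$, which lies in the hull. Its cost and information are within $c_{\max}$ and $\log|\mathcal U|$ of the original. There is one subtlety in the shift: the first summand of the shifted process is $I(X_2;U_2\mid U_1)$ rather than $I(X_2;U_2)$. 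I would handle this by simulating $U_1$ as a common random variable, or equivalently by paying an extra $H(U_1)\le\log|\mathcal U|$, using $I(X_2;U_2)\le I(X_2;U_2\mid U_1)+I(X_2;U_1)$ together with $I(X_2;U_1)\le \log|\mathcal U|$.

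Once the initial law is in the hull, the prefix construction above produces it exactly. The resulting horizon-$n+1$ policy for the true $P_{X_1}$ has total information at most that of the optimum plus $O(\log|\mathcal U|)$, and average cost at most $(nD+O(c_{\max}))/(n+1)$. To absorb the cost excess I would use convexity and continuity of $D\mapsto F(D)$: invoke the optimum at the slightly smaller level $D-\epsilon$, or mix with a zero-information policy achieving finite cost. Then let $n\to\infty$ and $\epsilon\downarrow0$, using continuity of $F$ on the interior of its domain. This yields $F(D)\le\liminf_n \underline G_n(D)/n$, and combined with the first paragraph it proves \eqref{eq:min-wrt-PX1}. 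Boundary values of $D$ would be handled by monotonicity and a separate limiting argument.
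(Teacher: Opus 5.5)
Your argument is correct, but it takes a detour that the paper's proof avoids. You try to reproduce the optimizer's initial \emph{state} law $P^\star_{X_1}$ one step later. That choice is what forces the reduction to initial laws in the convex hull of the rows of $Q_{X\mid U'}$, the horizon shift, and the $O(\log|\mc U|)$ information penalty.

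The paper keeps the time indices aligned and reproduces only the first \emph{action} marginal. Given feasible $(P'_{X_1},\{P'_{U_t\mid X_t,U_{t-1}}\}_{t=1}^n)$ for the problem with free $P_{X_1}$, it sets $P_{U_1\mid X_1}\coloneq P'_{U_1}$ under the true $P_1$ and keeps every kernel for $t\ge2$. Since $P_{U_{t+1},X_{t+1},U_t}=P_{U_{t+1}\mid X_{t+1},U_t}\,Q_{X_{t+1}\mid U_t}\,P_{U_t}$, induction from $P_{U_1}=P'_{U_1}$ makes all triples for $t\ge2$ coincide. The first term becomes $I(X_1;U_1)=0\le I'(X_1;U_1)$, and only the first-stage cost changes, by at most $c_{\max}/n$ after normalization. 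Right-continuity of $F$ then absorbs this, just as in your last step.

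If you chain your two constructions with $\pi=P^\star_{U_1}$, you get exactly this policy, except that at time $2$ you use the marginalized kernel $P^\star_{U_2\mid X_2}$ instead of $P^\star_{U_2\mid X_2,U_1}$. That substitution is the only source of your $\log|\mc U|$ slack. It also needs the check $I(X_2;U_2\mid U_1)\le I(X_2;U_2)$, which holds because $U_1\to X_2\to U_2$ is then a Markov chain; you did not state it.

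So the step you flagged as the main obstacle is not a real one: in a STIC system the first stage affects the rest of the horizon only through $P_{U_1}$, not $P_{X_1}$. Your route costs an extra constant that vanishes after dividing by $n$ and gives a state-centric reachability picture. The paper's route is shorter and loses nothing in the information term.
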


\begin{proof}[Proof sketch]
    By~\eqref{eq:simplified}, the direction $\ge$ is immediate, since fixing $X_1 \sim P_1$ is more restrictive than minimizing over $P_{X_1}$.
    For the direction $\le$,
    we show that for any fixed feasible collection of distributions $\Bigl\{ P'_{X_1}, \bigl\{P'_{U_t \mid X_t, U_{t-1}}\bigr\}_{t=1}^n \Bigr\}$,
    there exists a feasible collection of distributions $\bigl\{P_{U_t \mid X_t, U_{t-1}}\bigr\}_{t=1}^n$
    that has the same normalized objective and cost in the limit $n\to\infty$.
    This yields the equality of the two limits, and hence~\eqref{eq:min-wrt-PX1}.
    The full proof is given in Section~\ref{subsec:proof-prop:min-over-PX1}.
\end{proof}

The optimization in~\eqref{eq:min-wrt-PX1} is naturally parametrized by conditional distributions\linebreak
$\left( P_{X_1} , \left\{P_{U_t \mid X_t,U_{t-1}} \right\}_{t=1}^n \right)$.
For the symmetry arguments that follow, it is more convenient to reparametrize the problem in terms of the joint distributions of $(U_{t-1},X_t,U_t)$.

We introduce a \emph{marginalization} operator $\Marg$.
For a finite index set $A$, let $Z_A \coloneq (Z_i)_{i\in A}$
and $z_A \coloneq (z_i)_{i\in A}$.
For $S \subseteq A$, given a joint probability distribution $P_{Z_A}$,
we write $\Marg_{Z_S}\!\left(P_{Z_A}\right)$ for the distribution obtained
by marginalizing $P_{Z_A}$ onto the subcollection $Z_S$, namely,
\begin{equation} \label{eq:def-marg}
    \bigl[\Marg_{Z_S}\!\left(P_{Z_A}\right)\bigr](z_S)
    \coloneq
    \sum\nolimits_{z_{A\setminus S}}
    P_{Z_A}(z_S,z_{A\setminus S}),
\end{equation}
where the sum is over all values of $z_{A\setminus S}$.
Thus, $\Marg_{Z_S}(P_{Z_A})=P_{Z_S}$.
As an illustration, if $A=\{1,2,3,4\}$ and $S=\{2,3\}$, then
\begin{equation}
    \Marg_{Z_S}\!\left(P_{Z_A}\right)
    =
    \Marg_{Z_2,Z_3}\!\left(P_{Z_{[4]}}\right)
    =
    P_{Z_2,Z_3} .
\end{equation}

The next proposition performs a change of variables
from the conditionals to the joints,
and makes explicit the marginal consistency across time and the system-state constraints.

\begin{proposition}[Change of variables] \label{prop:c-o-v}
    For each minimization instance in~\eqref{eq:min-wrt-PX1}, we have
    \begin{IEEEeqnarray}{rCl rCl}
        \min_{\substack{
            P_{X_1}, \,
            \left\{P_{U_t \mid X_t, U_{t-1}}\right\}_{t=1}^n \colon \\
            \frac{1}{n} \sum_{t=1}^n
            \EE\left[c\left(X_t, U_t\right)\right] \le D
        }}
        \!\! \sum_{t=1}^n \! I(X_t ; U_t | U_{t-1})
        & = \! &
        \IEEEeqnarraymulticol{4}{l}{
            \min_{
                P_{U_1, X_1}, \,
                \left\{P_{U_t, X_t, U_{t-1}}\right\}_{t=2}^n
            }
            \sum_{t=1}^n \! I(X_t ; U_t \mid U_{t-1})
        }
        \label{eq:lem-c-o-v}
        \\[-2ex]
        &
        \rlap{\textup{s.t.}}
        & \ &
        \frac{1}{n} \sum_{t=1}^n
        \EE\left[c\left(X_t, U_t\right)\right] \!
        & \le &
        \! D ;
        \label{eq:c-o-v-constr0}
        \\[-0.5ex]
        &&&
        \Marg_{U_t} \!
        \left(P_{U_{t+1}, X_{t+1}, U_t}\right) \!
        & = &
        \! \Marg_{U_t} \!
        \left(P_{U_t, X_t, U_{t-1}}\right) \! ,
        \label{eq:c-o-v-constr1}
        \\[-2ex]
        &&&
        \IEEEeqnarraymulticol{3}{r}{
            t=1,\ldots,n-1 ;
        }
        \notag
        \\[-1.5ex]
        &&&
        P_{X_t \mid U_{t-1}} \!
        & = &
        \! Q_{X \mid U'} , \ t=2,\ldots,n .
        \label{eq:c-o-v-constr2}
    \end{IEEEeqnarray}
\end{proposition}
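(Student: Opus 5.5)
The plan is to build a value-preserving correspondence between the feasible sets of the two sides of~\eqref{eq:lem-c-o-v}. The objective and the cost constraint depend only on the one-step joints $P_{U_{t-1},X_t,U_t}$, which makes this natural. The left side is parametrized by $\bigl(P_{X_1},\{P_{U_t\mid X_t,U_{t-1}}\}_{t=1}^n\bigr)$, with $P_{U_1\mid X_1}$ as the first kernel. The right side is parametrized by $P_{U_1,X_1}$ and the joints $\{P_{U_t,X_t,U_{t-1}}\}_{t=2}^n$, subject to~\eqref{eq:c-o-v-constr1}--\eqref{eq:c-o-v-constr2}. I will show each direction of the inequality by mapping a feasible point of one side to a feasible point of the other with equal objective and cost.

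\emph{Direction $\ge$.} Fix a feasible collection on the left. It induces, together with the STIC dynamics $P_{X_t\mid U_{t-1}}=Q_{X\mid U'}$ for $t\ge2$, a joint law of $(X_{[n]},U_{[n]})$ with the structure of Figure~\ref{fig:dep2}. Take its marginals $P_{U_1,X_1}$ and $P_{U_t,X_t,U_{t-1}}$ for $t\ge2$. Constraint~\eqref{eq:c-o-v-constr1} holds because both sides are the marginal $P_{U_t}$ of a single joint law. Constraint~\eqref{eq:c-o-v-constr2} holds by the dynamics. Each term $I(X_t;U_t\mid U_{t-1})$ and each $\EE[c(X_t,U_t)]$ is a functional of the corresponding three-variable marginal, so the objective and cost are unchanged.

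\emph{Direction $\le$.} Given feasible joints on the right, I define $P_{X_1}=\Marg_{X_1}(P_{U_1,X_1})$, $P_{U_1\mid X_1}$ from $P_{U_1,X_1}$, and $P_{U_t\mid X_t,U_{t-1}}$ from $P_{U_t,X_t,U_{t-1}}$. On zero-probability conditioning events I choose an arbitrary kernel. These kernels, run through the STIC dynamics, generate a process $(\tilde X_{[n]},\tilde U_{[n]})$. By induction on $t$ I will show that its one-step joint equals the prescribed $P_{U_t,X_t,U_{t-1}}$:
\begin{itemize}
\item The base case $t=1$ holds by construction.
\item Suppose $P_{\tilde U_{t-1}}=\Marg_{U_{t-1}}(P_{U_{t-1},X_{t-1},U_{t-2}})$, which holds for $t=2$ by the base case. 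By~\eqref{eq:c-o-v-constr1} this equals $\Marg_{U_{t-1}}(P_{U_t,X_t,U_{t-1}})$.
\item By~\eqref{eq:c-o-v-constr2} the prescribed $P_{X_t\mid U_{t-1}}$ equals $Q_{X\mid U'}$, which is also the conditional law of $\tilde X_t$ given $\tilde U_{t-1}$.
\item The prescribed $U_t\mid X_t,U_{t-1}$ is exactly the kernel used to generate $\tilde U_t$.
\item Hence $P_{\tilde U_{t-1},\tilde X_t,\tilde U_t}=P_{U_{t-1}}\,Q_{X\mid U'}\,P_{U_t\mid X_t,U_{t-1}}=P_{U_{t-1},X_t,U_t}$.
\end{itemize}
Under this identity the objective and cost coincide, so the constructed point is feasible on the left with the same value.

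The main obstacle is this induction, specifically the chaining of consistency. Constraint~\eqref{eq:c-o-v-constr1} only equates single-variable marginals of adjacent joints. The point to get right is that this is enough: in the generated process $\tilde U_{t-1}$ screens off the past before $\tilde X_t$ and $\tilde U_t$ are generated. The only other care needed is the handling of null conditioning events, which does not affect any of the joints. With both directions established, and the minima attained by compactness and continuity on finite alphabets, the equality~\eqref{eq:lem-c-o-v} follows.
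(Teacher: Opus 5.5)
Your proposal is correct and follows essentially the same route as the paper. Both build a value-preserving two-way correspondence: the joints are read off from the law induced by the kernels, which is what the paper's recursion $P_{U_{t+1},X_{t+1},U_t}=P_{U_{t+1}\mid X_{t+1},U_t}\,Q_{X_{t+1}\mid U_t}\,P_{U_t}$ produces, and conversely the extracted kernels regenerate the prescribed joints by induction using the marginal-consistency and dynamics constraints. Your explicit handling of null conditioning events and of the inductive chaining is slightly more careful than the paper's sketch, but it introduces no new idea.
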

\noindent
Here the set of constraints~\eqref{eq:c-o-v-constr1} enforces \emph{consistency} of the marginals of the $U_t$'s across time, and the set~\eqref{eq:c-o-v-constr2} enforces that the system states $X_t$ evolve according to the given kernel $Q_{X \mid U'}$.

\begin{proof}[Proof sketch]
    We prove the equality by an explicit change of variables between the two\linebreak
    parametrizations.
    Starting from any feasible collection of variables for the left-hand side,
    we construct a feasible collection of variables for the right-hand side
    with the same objective and average cost,
    which proves the direction $\ge$.
    By an analogous argument for the other direction, we also show $\le$.
    As a result, the feasible sets on the two sides of~\eqref{eq:lem-c-o-v} correspond bijectively, with both the objective value and the average cost preserved, thereby proving the equality.
    The full proof is given in Section~\ref{subsec:proof-prop:c-o-v}.
\end{proof}

Combining Propositions~\ref{prop:min-over-PX1} and~\ref{prop:c-o-v} yields the following characterization of $F(D)$:
\begin{equation} \label{eq:c-o-v}
    \begin{aligned}
        F(D) =
        \lim_{n\to\infty}
        & \min_{ P_{U_1, X_1}, \, \left\{P_{U_t, X_t, U_{t-1}}\right\}_{t=2}^n } \dfrac{1}{n} \sum_{t=1}^n I(X_t ; U_t \mid U_{t-1}) \\
        & \qquad\qquad \textup{s.t.} \quad \eqref{eq:c-o-v-constr0},\eqref{eq:c-o-v-constr1},\eqref{eq:c-o-v-constr2}.
    \end{aligned}
\end{equation}
Each minimization instance in~\eqref{eq:c-o-v} is almost cyclically invariant;
only its $t=1$ boundary is exceptional,
since it uses $P_{U_1,X_1}$ instead of a triple of the form $P_{U_t,X_t,U_{t-1}}$.
We call by \emph{symmetrization} the operation of closing this
loop by extending $P_{U_1,X_1}$ to $P_{U_1,X_1,U_n}$
with an appropriate marginalization constraint.
This operation is only an auxiliary proof step and does not introduce a noncausal dependence; Proposition~\ref{prop:cyclify} shows that its normalized boundary effect vanishes as $n\to\infty$.
This \emph{symmetrization} is depicted in Figure~\ref{fig:dep3}.

\begin{proposition}[Symmetrization] \label{prop:cyclify}
    The rate-cost function $F(D)$ can be expressed as
    \begin{IEEEeqnarray}{l l r C l}
        F(D) =
        \lim_{n \to \infty}
        &
        \IEEEeqnarraymulticol{4}{l}{
            \min_{
                P_{U_1, X_1, U_n}, \,
                \left\{P_{U_t, X_t, U_{t-1}}\right\}_{t=2}^n
            }
            \dfrac{1}{n} \sum_{t=1}^n
            I(X_t ; U_t \mid U_{t-1})
        }
        \label{eq:symmetric}
        \\
        &
        \textup{s.t.}
        &
        &
        \makebox[0pt][c]{\eqref{eq:c-o-v-constr0}}
        & \; ;
        \\[-1ex]
        &
        &
        \Marg_{ U_t }
        \bigl( P_{U_{t+1}, X_{t+1}, U_t} \bigr)
        & = &
        \Marg_{ U_t }
        \bigl( P_{U_t, X_t, U_{t-1}} \bigr) ,
        \ t=1,\ldots,n ;
        \label{eq:sym-constr1}
        \\[-1ex]
        &
        &
        P_{X_t \mid U_{t-1}}
        & = &
        Q_{X \mid U'} ,
        \ t=1,\ldots,n ,
        \label{eq:sym-constr2}
    \end{IEEEeqnarray}
    where, in the minimization of size $n$, we set $U_0 \equiv U_n$, $U_{n+1} \equiv U_1$, and $X_{n+1} \equiv X_1$.
\end{proposition}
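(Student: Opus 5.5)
Denote by $G_n$ the (unnormalized) value of the $n$-th minimization in~\eqref{eq:c-o-v}, and by $H_n$ the value of the $n$-th cyclic minimization in~\eqref{eq:symmetric}. By Propositions~\ref{prop:min-over-PX1} and~\ref{prop:c-o-v}, $F(D)=\lim_n G_n/n$. The plan is to prove the sandwich
\[
G_n - K \le H_n \le G_{n'} + K', \qquad n' = n + m ,
\]
with constants $K,K'$ and a fixed $m$ that do not depend on $n$. Dividing by $n$ and letting $n\to\infty$ then shows that $\lim H_n/n$ exists and equals $F(D)$. The bounds involve fixed-size boundary terms, bounded by $\log|\mc U|$ or $c_{\max}=\max c$. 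Where they enter the cost constraint, I will absorb them by perturbing $D$ slightly, and then use continuity of $F$ in $D$, which follows from convexity of $F$ in $D$ (time-sharing on finite horizons passes to the limit).

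\emph{Lower bound $H_n \ge G_n - K$.} Take any feasible cyclic collection and drop the $U_n$ coordinate from $P_{U_1,X_1,U_n}$, keeping $P_{U_1,X_1}$. The remaining variables satisfy~\eqref{eq:c-o-v-constr0}--\eqref{eq:c-o-v-constr2}, since those constraints are a subset of the cyclic ones. The cost is unchanged. In the objective, only the $t=1$ term changes, from $I(X_1;U_1\mid U_n)$ to $I(X_1;U_1)$. These differ by at most $\log|\mc U|$, because $|I(X;U\mid V)-I(X;U)|\le \max\{I(X;V\mid U), I(X;V)\}\le H(V)$. So $G_n \le H_n + \log|\mc U|$, with no cost perturbation needed.

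\emph{Upper bound.} This is the main obstacle: starting from a near-optimal open chain, we must close the loop, i.e.\ produce a joint law in which $U_n$ and $U_0$ have the same marginal and $X_1\sim Q_{X\mid U'}(\cdot\mid U_n)$. My approach is to first replace the open-chain optimizer by one that is periodic. Given an optimizer of size $n$, I will average it over cyclic time shifts in the sense of mixtures. Concretely, I concatenate $k$ independent copies of the length-$n$ optimal policy. Each restart is allowed to condition on the previous action, because restart policies $P_{U_1\mid X_1}$ are in particular of the form $P_{U_t\mid X_t,U_{t-1}}$. The state kernel $Q$ is respected at the restarts because $X_1$'s law is free by Proposition~\ref{prop:min-over-PX1}. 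This concatenation yields an open chain of length $kn$ whose normalized objective is at most $G_n/n$.

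To close the loop, I use convexity, which is the symmetry-reduction idea. Form the uniform mixture over all $kn$ cyclic shifts of the per-time joints $\{P_{U_t,X_t,U_{t-1}}\}$. For $t=2,\dots,kn$ the constraints~\eqref{eq:sym-constr1}--\eqref{eq:sym-constr2} are linear in the joint, and $I(X;U\mid U')$ is convex in $P_{U',X,U}$ for fixed $P_{X\mid U'}$; this last fact is the key lemma to verify. The only defect is the boundary triple at the wrap point, which must be supplied by a coupling. I plan to handle it by choosing the $t=1$ triple as $P_{U_{kn}}\otimes Q\otimes$ (any kernel $P_{U\mid X}$ used at time $1$). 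The resulting marginal mismatch between $P_{U_{kn}}$ and $P_{U_0}$ is forced to vanish by the shift averaging, since averaging over shifts makes all $U$-marginals equal. The extra objective and cost of this single boundary term are $O(\log|\mc U|+c_{\max})$. After dividing by $kn$ and letting $n\to\infty$, this gives $\limsup H_n/n\le F(D+\epsilon)$ for every $\epsilon>0$, and continuity of $F$ finishes the argument.
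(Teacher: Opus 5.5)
\textbf{There is a genuine gap, in the upper bound.} Your lower bound ($H_n\ge G_n-\log|\mc U|$, obtained by dropping $U_n$) is correct and is the paper's easy direction. The upper bound fails at the step you identify as the main obstacle, closing the loop.

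With your boundary triple $P_{U_{kn}}\otimes Q\otimes P_{U_1\mid X_1}$, the $U'$-side matches by construction, but the $U$-side does not. Its $U$-marginal is $\sum_{u',x}P_{U_{kn}}(u')Q(x\mid u')P_{U_1\mid X_1}(\cdot\mid x)$. The $U'$-marginal of the $t=2$ triple is instead $P_{U_1}=\sum_x P_{X_1}(x)P_{U_1\mid X_1}(\cdot\mid x)$. These agree only if $P_{X_1}=P_{U_{kn}}Q$, which is circular because $P_{U_{kn}}$ is itself determined by $P_{X_1}$.

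Shift averaging does not remove this mismatch. After averaging, every triple equals the single mixture $\bar P$, so \eqref{eq:sym-constr1} requires $\bar P_U=\bar P_{U'}$. Telescoping the consistent interior constraints leaves $\bar P_U-\bar P_{U'}$ equal to $1/(kn)$ times the boundary mismatch, which is small but nonzero. Your remark that averaging ``makes all $U$-marginals equal'' conflates equality of the averaged marginals across time with the within-triple identity $P_U=P_{U'}$. The averaged point is therefore only approximately feasible for \eqref{eq:symmetric}, and you give no step that turns approximate feasibility into exact feasibility.

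The concatenation step has a related flaw. At each restart the state is forced to have law $P_{U_n}Q$, not the freely chosen $P'_{X_1}$; the freedom in Proposition~\ref{prop:min-over-PX1} concerns time $1$ only. So later blocks do not reproduce the first block's triples, objective, or cost.

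The cost slack also sits on the wrong side. Starting from open chains at level $D$, your construction controls the cyclic problem only at level $D+\epsilon$, not at $D$. To bound the cyclic value at $D$ you must start at level $D-\epsilon$, which needs left-continuity of $F$ at $D$. Convexity gives that only in the interior of the domain; it fails at the minimal achievable cost, where $F(D-\epsilon)=+\infty$.

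\textbf{How the paper avoids all three issues.} It averages the triples $t=2,\dots,n$ of a single open chain, with no concatenation. By Jensen, the average $\widetilde P^{(n)}$ lies in the relaxed set $\mc H_{\varepsilon_n}$: cost at most $D+\varepsilon_n$, exact kernel $Q$, and $|P_U-P_{U'}|\le\varepsilon_n$. Compactness of the simplex and continuity of $I(X;U\mid U')$ then give $\min_{\mc H_{\varepsilon_n}}I\to\min_{\mc H_0}I$. The limit set $\mc H_0$ has exact stationarity and exact cost at most $D$, so no continuity of $F$ in $D$ is needed. Any point of $\mc H_0$, repeated at every time index, is feasible for every cyclic instance.

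You could also close the loop exactly at length $n$, without averaging or concatenation, by taking the boundary triple $P_{U_n}\otimes Q\otimes P_{U_1}$, i.e., $U$ independent of $(X,U')$ with marginal $P_{U_1}$. This satisfies \eqref{eq:sym-constr1}--\eqref{eq:sym-constr2} with zero information cost, but it still leaves a $c_{\max}/n$ cost slack, which you would handle by the compactness argument above.
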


\begin{proof}[Proof sketch]
    We again justify the two directions $\le$ and $\ge$ separately by
    fixing a set of feasible variables for one side
    and then constructing a feasible collection of variables for the other side.
    We show that the normalized objective and cost are preserved in the limit $n\to\infty$ via the vanishing boundary differences due to the normalization $\frac1n$.
    The full proof is given in Section~\ref{subsec:proof-prop:cyclify}.
\end{proof}

\begin{figure}[!t]
    \centering

    \makebox[\textwidth][c]{%
    \hspace*{-0.1cm}%
    \begin{tikzpicture}[
            >=Latex,
            node distance=12mm and 16mm,
            block/.style={
                draw, rounded corners,
                align=center,
                minimum height=10mm, inner sep=3pt, fill=white
            },
            state/.style={block, minimum width=1.2cm, minimum height=1.2cm},
            ustate/.style={state, draw=blue, text=blue, fill=blue!0},
            line/.style={-Latex},
            uline/.style={-Latex, blue}
        ]

        \node[ustate] (U1) {$U_1$};
        \node[ustate, right=of U1] (U2) {$U_2$};
        \node[ustate, right=of U2] (U3) {$U_3$};
        \node[right=of U3] (Udots) {$\cdots$};
        \node[ustate, right=of Udots] (Unm1) {$U_{n-1}$};
        \node[ustate, right=of Unm1] (Un) {$U_n$};

        \node[state, circle, fill=gray!15, below=of U1] (X1) {$X_1$};
        \node[state, circle, fill=gray!15, below=of U2] (X2) {$X_2$};
        \node[state, circle, fill=gray!15, below=of U3] (X3) {$X_3$};
        \node[right=of X3] (Xdots) {$\cdots$};
        \node[state, circle, fill=gray!15, below=of Unm1] (Xnm1) {$X_{n-1}$};
        \node[state, circle, fill=gray!15, below=of Un] (Xn) {$X_n$};

        \draw[uline] (X1) -- (U1);
        \draw[uline] (X2) -- (U2);
        \draw[uline] (X3) -- (U3);
        \draw[uline] (Xnm1) -- (Unm1);
        \draw[uline] (Xn) -- (Un);

        \draw[uline] (U1) -- (U2);
        \draw[uline] (U2) -- (U3);
        \draw[line] (U3) -- (Udots);
        \draw[uline] (Udots) -- (Unm1);
        \draw[uline] (Unm1) -- (Un);

        \draw[line] (U1) -- (X2);
        \draw[line] (U2) -- (X3);
        \draw[line] (Unm1) -- (Xn);

        \draw[line, red, thick, dashed]
            (Un.east) .. controls +(25mm,25mm) and +(-25mm,25mm) .. (U1.west);
        \draw[line, red, thick, dashed] (Un.south west) -- (X1.north east);

    \end{tikzpicture}%
    }

    \caption{
        Symmetrization of the finite-horizon dependency graph.
        The auxiliary dashed red edges $U_n \to U_1$
        and $U_n \to X_1$
        close the chain and make the formulation invariant under cyclic shifts of the time index.
    }
    \label{fig:dep3}
\end{figure}

After symmetrization, every time index has the same role.
Each minimization instance in \eqref{eq:symmetric} is convex~\cite{Charalambous2016DirectedInformation, atay2026ratecost},
and \emph{exactly} invariant under cyclic shifts of the time index.
We
therefore apply symmetry reduction to pass from the $n$-letter formulation to a single-letter one,
yielding Theorem~\ref{thm:single-letter}
and giving the characterization in~\eqref{eq:single-letter} for $F(D)$.
The full proof of Theorem~\ref{thm:single-letter} is given in Section~\ref{subsec:proof-thm:single-letter}.

Figure~\ref{fig:dep4} illustrates this symmetry reduction at the level of the dependency graph.
In particular, it may be viewed as the single-letter dependency block of the symmetrized graph in Figure~\ref{fig:dep3}, with $(U',X,U)$ representing the common distribution of $(U_{t-1},X_t,U_t)$ under cyclic invariance.

\begin{figure}[t]
    \centering

    \begin{tikzpicture}[
            >=Latex,
            node distance=18mm and 22mm,
            block/.style={
                draw, rounded corners,
                align=center,
                minimum height=10mm, inner sep=4pt, fill=white
            },
            state/.style={block, minimum width=1.2cm, minimum height=1.2cm},
            ustate/.style={state, draw=blue, text=blue, fill=blue!0},
            line/.style={-Latex},
            uline/.style={-Latex, blue}
        ]

        \node[ustate] (Up) {$U'$};
        \node[ustate, right=of Up] (U) {$U$};

        \node[state, circle, fill=gray!15, below=of U] (X) {$X$};

        \draw[uline] (Up) -- (U);
        \draw[line] (Up) -- (X);
        \draw[uline] (X) -- (U);

    \end{tikzpicture}

    \caption{
        Graph describing
        the
        single-letter dependency structure obtained after symmetry reduction.
        It represents the single-letter dependency block of the symmetrized
        dependency graph in Figure~\ref{fig:dep3}, with $(U',X,U)$ representing the
        common distribution of $(U_{t-1},X_t,U_t)$ across time.
    }
    \label{fig:dep4}
\end{figure}

As a consequence, the higher-order-memory systems described in Remark~\ref{rem:higher-order-memory} admit the same single-letter characterization after replacing $U$ and $U'$ in Theorem~\ref{thm:single-letter} by the augmented actions $\widetilde U$ and $\widetilde U'$, respectively.

The derivation of the single-letter expression for the rate-cost function in Theorem~\ref{thm:single-letter} closely parallels the methodology developed in~\cite[Sec.~2]{Atay2026PosetMarkov}. The main technical differences stem from the slight distinction between the mathematical formulations of the rate-cost and channel-capacity problems: in~\cite[Sec.~2]{Atay2026PosetMarkov}, establishing independence from the initial distribution was more delicate and required fixed-point theory for set-valued maps, whereas here that step is simpler; on the other hand, in the present setting Proposition~\ref{prop:cyclify} yields
an asymptotic formulation of $F(D)$
where each optimization problem instance is
exactly symmetric,
while in~\cite[Sec.~2]{Atay2026PosetMarkov} the converse and achievability were proved separately using an approximately symmetric form.

\section{Proofs of Theorem~\ref{thm:single-letter} and Intermediate Propositions} \label{sec:proofs}

\subsection{Proof of Theorem~\ref{thm:single-letter}} \label{subsec:proof-thm:single-letter}

In the characterization of $F(D)$ given in~\eqref{eq:symmetric},
fix a time horizon $n$.
Since the minimization
instance on the right-hand side (RHS) of size $n$
is convex and invariant under cyclic shifts of $\left( X_{[n]},U_{[n]} \right)$, symmetry reduction~\cite[Ex.~4.4]{Boyd2004convex} implies that an optimal solution may be chosen to be cyclically invariant, as follows.
Averaging over all time-index shifts preserves feasibility and cannot increase the objective by convexity,
so an optimizer may be chosen with $P_{U_t,X_t,U_{t-1}}=P_{U,X,U'}$ for all $t$.
The objective then reduces to
\begin{equation}
    \dfrac{1}{n}\sum_{t=1}^n I(X_t ; U_t \mid U_{t-1}) = I \! \left( X;U\mid U' \right) ,
\end{equation}
and the constraints then reduce to $\EE[c(X,U)] \le D$, $P_U = P_{U'}$, and $P_{X\mid U'}=Q_{X\mid U'}$, which are exactly as in~\eqref{eq:single-letter}.
Hence each finite-horizon minimization equals the right-hand side of~\eqref{eq:single-letter}, and therefore so does the limit as $n \to \infty$.
This yields
Theorem~\ref{thm:single-letter}.

\subsection{Proof of Proposition~\ref{prop:simplification}} \label{subsec:proof-prop:simplification}

We justify the directions LHS $\ge$ RHS and LHS $\le$ RHS separately.
    
\underline{For LHS $\ge$ RHS,}
fix a set of feasible variables $\left\{P'_{U_t \mid X_{[t]}, U_{[t-1]}}\right\}_{t=1}^n$ for the LHS.
We construct a set of feasible variables $\left\{P_{U_t \mid X_t, U_{t-1}}\right\}_{t=1}^n$ for the RHS as follows.
Consider the joint distribution $P'_{X_{[n]},U_{[n]}}$
given by
\begin{align}
    P'_{X_{[n]},U_{[n]}} &= \prod_{t=1}^n P'_{U_t \mid X_{[t]}, U_{[t-1]}} \, P'_{X_t \mid X_{[t-1]}, U_{[t-1]}} \\
    &= \prod_{t=1}^n P'_{U_t \mid X_{[t]}, U_{[t-1]}} \, Q_{X_t \mid U_{t-1}} .
\end{align}
Take the distributions $P'_{U_t \mid X_t, U_{t-1}}$ given by
\begin{equation}
    P'_{U_t \mid X_t, U_{t-1}} = \dfrac{P'_{U_t, X_t, U_{t-1}}}{P'_{X_t, U_{t-1}}}
\end{equation}
where the numerator and the denominator are obtained by marginalizing the full joint distribution $P'_{X_{[n]},U_{[n]}}$.
Now for the RHS, take the variables to be
\begin{equation} \label{eq:proofs-1.1}
    P_{U_t \mid X_t, U_{t-1}} \coloneq P'_{U_t \mid X_t, U_{t-1}}, \quad t=1,\ldots,n.
\end{equation}
Let us show via induction that the induced triple-joints are equal for the LHS and the RHS, i.e.,
\begin{equation} \label{eq:proofs-1.2}
    P_{U_t, X_t, U_{t-1}} = P'_{U_t, X_t, U_{t-1}}, \quad t=1,\ldots,n .
\end{equation}
The base case $t=1$ holds because $P_{X_1} = P_1$ is fixed and we have set $P_{U_1 \mid X_1} = P'_{U_1 \mid X_1}$ which together imply $P_{U_1, X_1} = P'_{U_1, X_1}$. For the inductive step, assume we have $P_{U_t, X_t, U_{t-1}} = P'_{U_t, X_t, U_{t-1}}$ for some $t\ge1$. Marginalizing this joint gives $P_{U_t} = P'_{U_t}$, which together with~\eqref{eq:proofs-1.1} yields
\begin{align}
    P_{U_{t+1}, X_{t+1}, U_t} &= P_{U_{t+1} \mid X_{t+1}, U_t} \, Q_{X_{t+1} \mid U_t} \, P_{U_t} \\
    &= P'_{U_{t+1} \mid X_{t+1}, U_t} \, Q_{X_{t+1} \mid U_t} \, P'_{U_t} \\
    &= P'_{U_{t+1}, X_{t+1}, U_t} .
\end{align}
This concludes the induction, and hence shows~\eqref{eq:proofs-1.2}.
Furthermore, the constructed\linebreak
$\left\{P_{U_t \mid X_t, U_{t-1}}\right\}_{t=1}^n$ for the RHS is feasible as the equality $P_{U_t, X_t} = P'_{U_t, X_t}$ implies that the cost remains the same.
Next, we observe that the induced objective for the LHS is greater than or equal to the induced objective for the RHS.
Denote by $I'(\cdot)$ and $I(\cdot)$ the conditional mutual information terms computed with $P'_{X_{[n]}, U_{[n]}}$ and $P_{X_{[n]}, U_{[n]}}$, respectively.
For each summand on the LHS, using the chain rule we have
\begin{align}
    I' \! \left( X_{[t]} ; U_t \, \big| \, U_{[t-1]} \right) 
    &= I' \! \left( X_t ; U_t \, \big| \, X_{[t-1]}, U_{[t-1]} \right)  + \underbrace{I' \! \left( X_{[t-1]} ; U_t \, \big| \, U_{[t-1]} \right) }_{\ge 0} \\
    &\ge I' \! \left( X_t ; U_t \, \big| \, X_{[t-1]}, U_{[t-1]} \right)  \\
    &= I' \! \left( X_t ; U_t \, \big| \, U_{t-1} \right) 
       + \underbrace{I' \! \left( X_t ; X_{[t-1]},U_{[t-2]} \, \big| \, U_t, U_{t-1} \right)}_{\ge 0} \notag \\
    &\qquad
       - \underbrace{I' \! \left( X_t ; X_{[t-1]},U_{[t-2]} \, \big| \, U_{t-1} \right)}_{=0} \label{eq:chain} \\
    &\ge I'(X_t ; U_t \mid U_{t-1}) \\
    &= I(X_t ; U_t \mid U_{t-1}) \label{eq:proofs-1.3}
\end{align}
where~\eqref{eq:chain} follows from $I(A;B\mid C,D)=I(A;B\mid C)+I(A;D\mid B,C)-I(A;D\mid C)$,
and~\eqref{eq:proofs-1.3} follows from~\eqref{eq:proofs-1.2}.
Summing over $t=1,\ldots,n$ gives
\begin{equation}
    \sum_{t=1}^n I' \! \left( X_{[t]} ; U_t \, \big| \, U_{[t-1]} \right) \ge \sum_{t=1}^n I(X_t ; U_t \mid U_{t-1}) .
\end{equation}
This means that for any set of feasible variables for the LHS of~\eqref{eq:simplification}, there exists a set of feasible variables for the RHS with objective less than or equal to that of the LHS, which implies LHS $\ge$ RHS in~\eqref{eq:simplification}.

\underline{For LHS $\le$ RHS,}
fix a set of feasible variables $\left\{P'_{U_t \mid X_t, U_{t-1}}\right\}_{t=1}^n$ for the RHS.
We construct a set of feasible variables $\left\{P_{U_t \mid X_{[t]}, U_{[t-1]}}\right\}_{t=1}^n$ for the LHS as
\begin{align}
    P_{U_t \mid X_{[t]}, U_{[t-1]}} &= P_{U_t \mid X_t, U_{t-1}} \label{eq:lhs-construct} \\
    &= P'_{U_t \mid X_t, U_{t-1}} .
\end{align}
The inductive proof for
the equality~\eqref{eq:proofs-1.2}
of the triple-joint distributions we gave in the\linebreak
LHS $\ge$ RHS part above also applies here verbatim. Hence, we have $P_{U_t, X_t, U_{t-1}} = P'_{U_t, X_t, U_{t-1}}$ for $t=1,\ldots,n$. This, in particular, implies $P_{U_t, X_t} = P'_{U_t, X_t}$ and therefore shows that the constructed set of variables $\left\{P_{U_t \mid X_{[t]}, U_{[t-1]}}\right\}_{t=1}^n$ for the LHS is feasible as the cost remains the same as in the RHS.
Next, to compare the objective values on the LHS and RHS, we denote the terms on the LHS by $I(\cdot)$ as usual and the ones on the RHS by $I'(\cdot)$.

Multiplying the two equalities~\eqref{eq:static-2}
and~\eqref{eq:lhs-construct} side-by-side implies
\begin{equation}
    P_{X_t,U_t\mid X_{[t-1]},U_{[t-1]}} = P_{X_t, U_t \mid U_{t-1}} .
\end{equation}
Therefore, we have the conditional independence
\begin{equation} \label{eq:pair-ci}
    (X_t,U_t)\perp \left( X_{[t-1]},U_{[t-2]} \right) \,\big|\, U_{t-1}.
\end{equation}
Also, since, by~\eqref{eq:lhs-construct}, the control kernel depends on the past only through $(X_t,U_{t-1})$, we have
\begin{equation} \label{eq:ut-ci-given-current}
    U_t \perp \left( X_{[t-1]},U_{[t-2]} \right) \,\big|\, X_t,U_{t-1}.
\end{equation}
The independencies \eqref{eq:pair-ci} and \eqref{eq:ut-ci-given-current} have the following consequences:
\begin{align}
    U_t &\perp X_{[t-1]} \,\big|\, X_t,U_{[t-1]}, \label{eq:ci1} \\
    U_t &\perp U_{[t-2]} \,\big|\, U_{t-1}, \label{eq:ci2} \\
    U_t &\perp U_{[t-2]} \,\big|\, X_t,U_{t-1}. \label{eq:ci3}
\end{align}
Using these conditional independencies, we obtain
\begin{align}
    I \! \left( X_{[t]} ; U_t \, \big| \, U_{[t-1]} \right)
    &= I \! \left( X_t;U_t \, \big| \, U_{[t-1]} \right) + \underbrace{I \! \left( X_{[t-1]};U_t \, \big| \, X_t,U_{[t-1]} \right)}_{=0} \label{eq:proofs-1.4} \\
    &= I \! \left( X_t;U_t \, \big| \, U_{[t-1]} \right) \\
    &= I \! \left( X_t;U_t \, \big| \, U_{[t-1]} \right) + \underbrace{I \! \left( U_{[t-2]};U_t \, \big| \, U_{t-1} \right)}_{=0} \\
    &= I \! \left( X_t,U_{[t-2]};U_t \, \big| \, U_{t-1} \right) \label{eq:proofs-1.5} \\
    &= I(X_t;U_t\mid U_{t-1}) + \underbrace{I \! \left( U_{[t-2]};U_t \, \big| \, X_t,U_{t-1} \right)}_{=0} \label{eq:proofs-1.6} \\
    &= I(X_t;U_t\mid U_{t-1}) \\
    &= I'(X_t;U_t\mid U_{t-1}), \label{eq:lhs-construct-objective}
\end{align}
where~\eqref{eq:proofs-1.4},~\eqref{eq:proofs-1.5} and~\eqref{eq:proofs-1.6} follow from the chain rule;
and~\eqref{eq:lhs-construct-objective} is due to the constructed law $P$ and the original law $P'$ having the same triple-joint law as in~\eqref{eq:proofs-1.2}.

By~\eqref{eq:lhs-construct-objective}, we conclude that for any set of feasible variables for the RHS of~\eqref{eq:simplification}, there exists a set of feasible variables for the LHS with objective equal to that of the RHS, which implies LHS $\le$ RHS in~\eqref{eq:simplification}.

As a result, we have LHS $\le$ RHS as well as LHS $\ge$ RHS in~\eqref{eq:simplification}, which implies equality, and therefore yields Proposition~\ref{prop:simplification}.

\subsection{Proof of Proposition~\ref{prop:min-over-PX1}} \label{subsec:proof-prop:min-over-PX1}

Throughout the proof, the quantities $F_n(D)$ and $\widetilde F_n(D)$ denote
\begin{align}
    F_n(D)
    \coloneq
    & \min_{\substack{ \left\{P_{U_t \mid X_t, U_{t-1}}\right\}_{t=1}^n \\ \frac{1}{n} \sum_{t=1}^n \EE\left[c\left(X_t, U_t\right)\right] \le D}} \dfrac{1}{n} \sum_{t=1}^n I(X_t ; U_t \mid U_{t-1}) \label{eq:proofs-2.1} \\
    & \qquad\quad \textup{s.t.} \quad X_1 \sim P_1 \notag
\end{align}
and
\begin{align}
    \widetilde F_n(D)
    \coloneq
    \min_{\substack{ P_{X_1}, \, \left\{P_{U_t \mid X_t, U_{t-1}}\right\}_{t=1}^n \\ \frac{1}{n} \sum_{t=1}^n \EE\left[c\left(X_t, U_t\right)\right] \le D}} \dfrac{1}{n} \sum_{t=1}^n I(X_t ; U_t \mid U_{t-1}) .
\end{align}
By~\eqref{eq:simplified}, it is necessary and sufficient to prove the equality
\begin{equation} \label{eq:nec-and-suff-proof}
    \limsup_{n \to \infty} F_n(D)
    =
    \limsup_{n \to \infty} \widetilde F_n(D) .
\end{equation}

Since fixing $X_1\sim P_1$ is more restrictive than minimizing over
$P_{X_1}$, we have
\begin{equation}
    \widetilde F_n(D)\le F_n(D)
    \quad \forall n.
\end{equation}
Therefore, using~\eqref{eq:simplified},
\begin{equation} \label{eq:tilde-limsup-upper}
    \limsup_{n\to\infty}\widetilde F_n(D)
    \le
    \lim_{n\to\infty}F_n(D)
    =
    F(D).
\end{equation}
This proves the direction LHS $\ge$ RHS.

For the reverse direction LHS $\le$ RHS, fix $n \ge 1$ and take any feasible variables\linebreak
$\Bigl\{ P'_{X_1}, \, \bigl\{P'_{U_t \mid X_t, U_{t-1}}\bigr\}_{t=1}^n \Bigr\}$
for $\widetilde F_n(D)$.
Let
\begin{equation}
    P'_{U_1}(\cdot)
    =
    \sum_{x_1 \in \mc X} P'_{U_1 \mid X_1}(\cdot \mid x_1)\, P'_{X_1}(x_1)
\end{equation}
be the induced marginal of $U_1$.
Now define variables for $F_n(\cdot)$ by
\begin{align}
    P_{U_1 \mid X_1}
    &\coloneq P'_{U_1}, \\
    P_{U_t \mid X_t, U_{t-1}}
    &\coloneq P'_{U_t \mid X_t, U_{t-1}},
    \qquad t=2,\ldots,n .
\end{align}
This construction yields $P_{U_2}=P'_{U_2}$, and via induction we see that
\begin{align}
    P_{U_t}=P'_{U_t}
    &\implies P_{X_{t+1}, U_t}=P'_{X_{t+1}, U_t} \label{eq:proofs-2.2} \\
    &\implies P_{U_{t+1}, X_{t+1}, U_t}=P'_{U_{t+1}, X_{t+1}, U_t} \label{eq:proofs-2.3} \\
    &\implies P_{U_{t+1}}=P'_{U_{t+1}} , \label{eq:proofs-2.4}
\end{align}
where~\eqref{eq:proofs-2.2} is obtained by multiplying both sides by $Q_{X_{t+1}\mid U_t}$,
\eqref{eq:proofs-2.3} is because\linebreak
${P_{U_{t+1}\mid X_{t+1},U_t}=P'_{U_{t+1}\mid X_{t+1},U_t}}$,
and~\eqref{eq:proofs-2.4} is by marginalizing.
Hence $P_{U_t, X_t, U_{t-1}} = P'_{U_t, X_t, U_{t-1}}$ for $t=2,\ldots,n$.
Denote by $I'(\cdot)$ and $I(\cdot)$ the conditional mutual information terms computed with distributions $P'$ and $P$, respectively,
and analogously for the expectations $\EE'[\cdot]$ and $\EE[\cdot]$.
It follows that, for every $t\ge2$,
\begin{equation}
    I(X_t;U_t\mid U_{t-1}) = I'(X_t;U_t\mid U_{t-1}),
    \qquad
    \EE[c(X_t,U_t)] = \EE'[c(X_t,U_t)] .
\end{equation}
Also, by construction, $U_1$ is independent of $X_1$ under the law $P$, so
\begin{equation}
    I(X_1;U_1)=0 \le I'(X_1;U_1).
\end{equation}
Therefore,
\begin{equation} \label{eq:obj-compare}
    \frac{1}{n} \sum_{t=1}^n I(X_t;U_t\mid U_{t-1})
    \le
    \frac{1}{n} \sum_{t=1}^n I'(X_t;U_t\mid U_{t-1}) .
\end{equation}

What remains is the cost $c(X_1,U_1)$ at time step $t=1$.
Let $\delta_n$ denote
\begin{equation}
    \delta_n
    \coloneq
    \frac{1}{n} \Bigl| \EE[c(X_1,U_1)]-\EE'[c(X_1,U_1)] \Bigr| .
\end{equation}
Since the costs agree for $t=2,\ldots,n$, the average cost under the law $P$ satisfies
\begin{align}
    \frac{1}{n} \sum_{t=1}^n \EE[c(X_t,U_t)]
    &=
    \frac{1}{n} \sum_{t=1}^n \EE'[c(X_t,U_t)]
    +
    \frac{1}{n}\Bigl(\EE[c(X_1,U_1)]-\EE'[c(X_1,U_1)]\Bigr) \\
    &\le D+\delta_n .
\end{align}
Recall that the alphabets $\mc X, \mc U$ are finite.
Hence, $\Bigl| \EE[c(X_1,U_1)]-\EE'[c(X_1,U_1)] \Bigr|$ can be upper bounded by a constant.
Therefore, since only the single first-stage term is altered, its contribution to the normalized average cost vanishes, i.e., $\delta_n \to 0$ as $n\to\infty$.

Fix any $\varepsilon>0$.
Then we have $\delta_n \le \varepsilon$ for all sufficiently large $n$,
and hence the variables constructed above satisfy the cost constraint in the definition of $F_n(D+\varepsilon)$;
that is, they form a feasible point of the optimization problem defining $F_n(D+\varepsilon)$.
Combining this with~\eqref{eq:obj-compare} yields, for all sufficiently large $n$,
\begin{equation}
    F_n(D+\varepsilon)
    \le
    \frac{1}{n} \sum_{t=1}^n I'(X_t;U_t\mid U_{t-1}) .
\end{equation}
Taking the infimum over all feasible laws $P'$ implies that for all sufficiently large $n$, we obtain
\begin{equation}
    F_n(D+\varepsilon) \le \widetilde F_n(D) .
\end{equation}
Taking the liminf as $n\to\infty$
and using the definition of $F(D+\varepsilon)$ gives
\begin{equation} \label{eq:tilde-liminf-lower-eps}
    F(D+\varepsilon)
    =
    \lim_{n\to\infty}F_n(D+\varepsilon)
    \le
    \liminf_{n\to\infty}\widetilde F_n(D)
    \quad \forall\varepsilon>0.
\end{equation}
Since $F$ is right-continuous at $D$, letting
$\varepsilon\downarrow0$ in~\eqref{eq:tilde-liminf-lower-eps} yields
\begin{equation} \label{eq:tilde-liminf-lower}
    F(D)
    =
    \lim_{\varepsilon\downarrow0}F(D+\varepsilon)
    \le
    \liminf_{n\to\infty}\widetilde F_n(D).
\end{equation}
Combining~\eqref{eq:tilde-limsup-upper}
and~\eqref{eq:tilde-liminf-lower}, we obtain
\begin{equation}
    F(D)
    \le
    \liminf_{n\to\infty}\widetilde F_n(D)
    \le
    \limsup_{n\to\infty}\widetilde F_n(D)
    \le
    F(D).
\end{equation}
Therefore, the limit exists and satisfies
\begin{equation}
    \lim_{n\to\infty}\widetilde F_n(D)=F(D),
\end{equation}
which proves~\eqref{eq:nec-and-suff-proof} and yields
Proposition~\ref{prop:min-over-PX1}.

\subsection{Proof of Proposition~\ref{prop:c-o-v}} \label{subsec:proof-prop:c-o-v}

We fix a set of feasible variables for the LHS and construct one for the RHS with the same objective value, which will show LHS $\ge$ RHS, and then vice versa.

\underline{For LHS $\ge$ RHS,} fix a set of feasible variables $P'_{X_1}, \, \left\{P'_{U_t \mid X_t, U_{t-1}}\right\}_{t=1}^n$ for the LHS.
These variables induce the joint distribution $P'_{U_1,X_1}$ through $P'_{U_1 \mid X_1}$ and $P'_{X_1}$. The marginal $P'_{U_1}$ is then obtained from $P'_{U_1,X_1}$. Together with $Q_{X_2 \mid U_1}$ and $P'_{U_2 \mid X_2, U_1}$, one obtains $P'_{U_2, X_2, U_1} = P'_{U_2 \mid X_2, U_1} \cdot Q_{X_2 \mid U_1} \cdot P'_{U_1}$, which can then be marginalized to obtain $P'_{U_2}$.
Following this inductively, once one has the marginal $P'_{U_t}$, together with $Q_{X_{t+1} \mid U_t}$ and $P'_{U_{t+1} \mid X_{t+1}, U_t}$, one obtains the triple-joint distribution
\begin{equation}
    P'_{U_{t+1}, X_{t+1}, U_t} = P'_{U_{t+1} \mid X_{t+1}, U_t} \cdot Q_{X_{t+1} \mid U_t} \cdot P'_{U_t}
\end{equation}
which by marginalizing gives $P'_{U_{t+1}}$ and carries on to the next step.
The set of variables\linebreak
$P'_{U_1, X_1}, \, \left\{P'_{U_t, X_t, U_{t-1}}\right\}_{t=2}^n$ obtained through this process:
(i) induces the same objective value;
(ii) induces the same cost; and
(iii) satisfies the constraints on the RHS of~\eqref{eq:lem-c-o-v} by construction.
Therefore, for each fixed set of feasible variables for the LHS,
one has a set of feasible variables for the RHS with the same objective.
This shows LHS $\ge$ RHS.

\underline{For LHS $\le$ RHS,} fix a set of feasible variables $P'_{U_1, X_1}, \, \left\{P'_{U_t, X_t, U_{t-1}}\right\}_{t=2}^n$ for the RHS of~\eqref{eq:lem-c-o-v} that satisfy the constraints therein.
From each of these joint distributions, one obtains the distributions $P'_{X_1}$, $P'_{U_1 \mid X_1}$, and $P'_{U_t \mid X_t, U_{t-1}}$ for $t=2,\ldots,n$.
Following the procedure of constructing the triple-joints illustrated above in the part LHS $\ge$ RHS,
one recovers the original set of distributions due to the constraints in~\eqref{eq:lem-c-o-v}.
Hence, the set of variables $P'_{X_1}, \, \left\{P'_{U_t \mid X_t, U_{t-1}}\right\}_{t=1}^n$ obtained through this process:
(i) induces the same objective value; and
(ii) induces the same cost.
Therefore, for each fixed set of feasible variables for the RHS,
one has a set of feasible variables for the LHS with the same objective.
This shows LHS $\le$ RHS.

LHS $\ge$ RHS and LHS $\le$ RHS together imply equality in~\eqref{eq:lem-c-o-v}, which yields Proposition~\ref{prop:c-o-v}.

\subsection{Proof of Proposition~\ref{prop:cyclify}} \label{subsec:proof-prop:cyclify}

By~\eqref{eq:lem-c-o-v}, the proposition is equivalent to proving the equality
\begin{equation} \label{eq:cyclify}
    \begin{alignedat}{2}
        \lim_{n\to\infty} \!
        & \min_{\substack{ P_{U_1, X_1}, \\ \left\{P_{U_t, X_t, U_{t-1}}\right\}_{t=2}^n }} \! \dfrac{1}{n} \sum_{t=1}^n I(X_t ; U_t \mid U_{t-1})
        =
        \lim_{n \to \infty} \!
        && \min_{\substack{ P_{U_1, X_1, U_n}, \\ \left\{P_{U_t, X_t, U_{t-1}}\right\}_{t=2}^n }} \! \dfrac{1}{n} \sum_{t=1}^n I(X_t ; U_t \mid U_{t-1}) \\[-1ex]
        & \qquad\ \textup{s.t.} \ \eqref{eq:c-o-v-constr0}, \eqref{eq:c-o-v-constr1}, \eqref{eq:c-o-v-constr2}
        && \qquad\ \textup{s.t.} \ \eqref{eq:c-o-v-constr0}, \eqref{eq:sym-constr1}, \eqref{eq:sym-constr2} .
    \end{alignedat}
\end{equation}

The direction LHS $\le$ RHS is rather straightforward, as follows.
Take any sequence of feasible sets of variables $\Bigl\{ P'_{U_1, X_1, U_n}, \, \bigl\{P'_{U_t, X_t, U_{t-1}}\bigr\}_{t=2}^n \Bigr\}_{n\ge1}$ for the RHS.
For each $n$,
one can marginalize to obtain $P'_{U_1, X_1}$ and can take $\left\{P'_{U_t, X_t, U_{t-1}}\right\}_{t=2}^n$ as is. This yields a feasible set of variables for the LHS.
Moreover, the values of $I(X_t ; U_t \mid U_{t-1})$ agree for $2 \le t \le n$, while for $t=1$ the two objectives differ by at most a constant independent of $n$:
\begin{equation}
    \Bigl| I(X_1;U_1) - I(X_1;U_1 \mid U_n) \Bigr|
    \le H(X_1) \le \log |\mc X| .
\end{equation}
Hence the normalized difference between the two objectives is at most $\frac{\log |\mc X|}{n}$, which vanishes as $n\to\infty$. Therefore, we have LHS $\le$ RHS.

In the rest of this proof, we demonstrate the direction LHS $\ge$ RHS.
Since $\mc X$ and $\mc U$ are finite, the cost is bounded; let
\begin{equation}
    c_{\max} \coloneq \max_{(x,u)\in \mc X\times \mc U} c(x,u) < \infty .
\end{equation}
For all $\varepsilon>0$, define the set $\mc H_\varepsilon$ as
\begin{align} \label{eq:def-D-eps}
    \mc H_\varepsilon
    \coloneq
    \Bigl\{ P_{U,X,U'} \colon \, \EE[c(X,U)] \le D+\varepsilon , \, P_{X \mid U'} = Q_{X \mid U'} , \, \bigl| P_{U}(u) - P_{U'}(u) \bigr| \le \varepsilon \ \forall u \!\in\! \mc U \Bigr\} .
\end{align}

Take any sequence of feasible sets of variables $\Bigl\{ P'_{U_1, X_1}, \, \bigl\{ P'_{U_t, X_t, U_{t-1}} \bigr\}_{t=2}^n \Bigr\}_{n\ge1}$ for the LHS.
For each $n$, define the
distribution $\widetilde P^{(n)}_{U,X,U'}$ as the average of
$\left\{P'_{U_t, X_t, U_{t-1}}\right\}_{t=2}^n$, i.e.,
\begin{equation} \label{eq:tilde-P-def}
    \widetilde P^{(n)}_{U,X,U'} \! \left( u,x,u' \right) \coloneq \dfrac{1}{n-1} \sum_{t=2}^n P'_{U_t, X_t, U_{t-1}} \! \left( u,x,u' \right) \quad \forall \! \left( u,x,u' \right) \in \mc U \times \mc X \times \mc U.
\end{equation}
We have $\widetilde P^{(n)}_{U,X,U'} \in \mc H_{\varepsilon_n}$, where
\begin{equation}
    \varepsilon_n \coloneq \max\left\{ \frac{1}{n-1}, \frac{D+c_{\max}}{n-1} \right\}.
\end{equation}
Indeed, first,
\begin{align}
    \EE_{\widetilde P^{(n)}}[c(X,U)]
    &= \dfrac{1}{n-1}\sum_{t=2}^n \EE[c(X_t,U_t)] \\
    &= \dfrac{n}{n-1}\cdot \dfrac{1}{n}\sum_{t=1}^n \EE[c(X_t,U_t)] - \dfrac{1}{n-1}\EE[c(X_1,U_1)] \\
    &\le \dfrac{n}{n-1}D + \dfrac{c_{\max}}{n-1} \\
    &= D + \dfrac{D+c_{\max}}{n-1} \\
    &\le D+\varepsilon_n .
\end{align}
Next,
due to the constraints on the LHS of~\eqref{eq:cyclify}, we have
$P'_{X_t \mid U_{t-1}} = Q_{X \mid U'}$ for every $t=2,\ldots,n$.
The averaging in~\eqref{eq:tilde-P-def} preserves this property and yields
\begin{equation}
    \widetilde P^{(n)}_{X \mid U'} = Q_{X \mid U'} .
\end{equation}
Finally,
\begin{align}
    \widetilde P^{(n)}_{U} (u) &= \sum_{x,u'} \widetilde P^{(n)}_{U,X,U'} \! \left( u,x,u' \right) \\
    &= \dfrac{1}{n-1} \sum_{t=2}^n P'_{U_t} (u) ,
\end{align}
and analogously
\begin{align}
    \widetilde P^{(n)}_{U'} \! \left( u' \right) &= \sum_{u,x} \widetilde P^{(n)}_{U,X,U'} \! \left( u,x,u' \right) \\
    &= \dfrac{1}{n-1} \sum_{t=2}^n P'_{U_{t-1}} \! \left( u' \right) \\
    &= \dfrac{1}{n-1} \sum_{t=1}^{n-1} P'_{U_t} \! \left( u' \right) ,
\end{align}
which together yield
\begin{align}
    \left| \widetilde P^{(n)}_{U} (u) - \widetilde P^{(n)}_{U'} (u) \right|
    &= \dfrac{1}{n-1} \left| P'_{U_n} (u) - P'_{U_1} (u) \right| \\
    &\le \dfrac{1}{n-1} \quad \forall u\in\mc U .
\end{align}
Hence, by~\eqref{eq:def-D-eps}, we have $\widetilde P^{(n)}_{U,X,U'} \in \mc H_{\varepsilon_n}$.

Again, denote by $I'(\cdot)$ and $\widetilde I(\cdot)$
the conditional mutual information terms computed with distributions $P'$ and $\widetilde P^{(n)}$, respectively.
By the convexity of mutual information and by Jensen's inequality,
we obtain that for each $n$,
\begin{equation}
    \dfrac{1}{n-1}\sum_{t=2}^n I'(X_t ; U_t \mid U_{t-1})
    \ge
    \widetilde I \! \left( X;U\mid U' \right) .
\end{equation}
Therefore,
\begin{equation}
    \dfrac{1}{n}\sum_{t=1}^n I'(X_t;U_t\mid U_{t-1})
    \ge
    \dfrac{n-1}{n}\,\widetilde I \! \left( X;U\mid U' \right) .
\end{equation}
Moreover, each
set of original distributions
$\Bigl\{ P'_{U_1, X_1}, \, \bigl\{ P'_{U_t, X_t, U_{t-1}} \bigr\}_{t=2}^n \Bigr\}_{n\ge1}$ we had fixed for the LHS
can be used to construct the average $\widetilde P^{(n)}_{U,X,U'} \in \mc H_{\varepsilon_n}$ as in~\eqref{eq:tilde-P-def}, which implies that
\begin{equation} \label{eq:ineq-D-1-n}
    \begin{aligned}
        & \min_{ P_{U_1, X_1}, \, \left\{P_{U_t, X_t, U_{t-1}}\right\}_{t=2}^n } \dfrac{1}{n} \sum_{t=1}^n I(X_t ; U_t \mid U_{t-1})
        \ \ge \
        \dfrac{n-1}{n}\min_{ P_{U, X, U'} \in \mc H_{\varepsilon_n}} I \! \left( X;U\mid U' \right) . \\[-1.5ex]
        & \qquad\quad\ \ \textup{s.t.} \ \eqref{eq:c-o-v-constr0}, \eqref{eq:c-o-v-constr1}, \eqref{eq:c-o-v-constr2}
    \end{aligned}
\end{equation}

Now $\varepsilon_n \xrightarrow[n\to\infty]{} 0$.
Since $\mc X, \mc U$ are finite, the probability simplex on $\mc U \times \mc X \times \mc U$ is compact and the functional $P_{U,X,U'} \mapsto I \! \left( X;U\mid U' \right)$ is continuous.
Consider the set $\mc H_0$ defined by
\begin{equation}
    \mc H_0 \coloneq \bigcap_{\varepsilon>0} \mc H_\varepsilon
    = \left\{ P_{U,X,U'} \colon \EE[c(X,U)] \le D,\; P_{X\mid U'}=Q_{X\mid U'},\; P_U=P_{U'} \right\} .
\end{equation}
The sequence $\left\{ \mc H_{\varepsilon_n} \right\}_{n\ge1}$ is a decreasing sequence of nonempty compact sets. Hence, the limit $\mc H_0$ exists and is nonempty.
Thus, we have
\begin{equation}
    \lim_{n\to\infty} \min_{P_{U,X,U'} \in \mc H_{\varepsilon_n}} I \! \left( X;U\mid U' \right)
    =
    \min_{P_{U,X,U'} \in \mc H_0} I \! \left( X;U\mid U' \right) .
\end{equation}
Therefore, taking $\lim_{n\to\infty}$ in~\eqref{eq:ineq-D-1-n} yields
\begin{equation}
    \textup{LHS}
    \ge
    \min_{ P_{U, X, U'} \in \mc H_0} I \! \left( X;U\mid U' \right) .
\end{equation}

But notice that any distribution $P_{U, X, U'} \in \mc H_0$, when taken equal to all $P_{U_t,X_t,U_{t-1}}$ for $t=2,\ldots,n$ and also to $P_{U_1,X_1,U_n}$, yields a feasible point for every minimization instance for the RHS of~\eqref{eq:cyclify}.
Indeed, the condition $P_U=P_{U'}$
gives~\eqref{eq:sym-constr1},
the condition $P_{X\mid U'}=Q_{X\mid U'}$
gives~\eqref{eq:sym-constr2}
and the condition $\EE[c(X,U)]\le D$ gives the cost constraint.
Hence
\begin{equation} \label{eq:D0-ge-RHS}
    \begin{aligned}
        \min_{ P_{U, X, U'} \in \mc H_0} I \! \left( X;U\mid U' \right)
        \ \ge \
        \lim_{n \to \infty}
        & \min_{ P_{U_1, X_1, U_n}, \, \left\{P_{U_t, X_t, U_{t-1}}\right\}_{t=2}^n } \dfrac{1}{n} \sum_{t=1}^n I(X_t ; U_t \mid U_{t-1}) \\[-1.5ex]
        & \qquad\qquad\ \textup{s.t.} \ \eqref{eq:c-o-v-constr0}, \eqref{eq:sym-constr1}, \eqref{eq:sym-constr2} .
    \end{aligned}
\end{equation}
Therefore, LHS $\ge$ RHS in~\eqref{eq:cyclify}.

LHS $\le$ RHS and LHS $\ge$ RHS together imply equality in~\eqref{eq:cyclify}, which yields Proposition~\ref{prop:cyclify}.

\section{Numerical Example} \label{sec:numer}

In this section,
we provide a numerical example
illustrating
the single-letter characterization of the STIC system rate-cost function $F(D)$ given in Theorem~\ref{thm:single-letter}.
Throughout the section, alphabets are $\mc X=\mc U=\{0,1\}$ and the control cost function is $c(x,u)=\mathbf 1\{x\ne u\}$.

The single-letter characterization in Theorem~\ref{thm:single-letter} includes the stationarity constraint $P_U=P_{U'}$.
Thus, in the infinite-horizon limit, the optimal one-step policy must preserve the marginal distribution of the control action.
A controller that optimizes each stage locally, without enforcing this marginal consistency, can therefore be suboptimal over the long run.
This motivates the relaxed quantity
\begin{equation}
    \label{eq:myopic}
    \begin{array}{r@{\;}l@{\quad}r@{\;}c@{\;}l}
        F^\textup{myopic}(D) \coloneq
        &
        \multicolumn{4}{l}{
            \displaystyle
            \min_{P_{U,X,U'}}
            I\!\left(X;U\mid U'\right)
        }
        \\[-1ex]
        &
        \quad \textup{s.t.}
        &
        \displaystyle
        \EE\!\left[c(X,U)\right]
        & \le &
        D \, ,
        \\[-1ex]
        &
        &
        P_{X\mid U'}
        & = &
        Q_{X\mid U'} \, .
    \end{array}
\end{equation}
Since~\eqref{eq:myopic} removes the constraint $P_U=P_{U'}$ from the characterization of $F(D)$,
we have
\begin{equation}
    F^\textup{myopic}(D) \le F(D)
\end{equation}
for every kernel $Q_{X \mid U'}$ and every $D\ge0$.

\begin{figure}[t]
    \centering

    \begin{tikzpicture}[>=stealth',
                    node font=\large,
                    every node/.style={align=center},
                    x=4cm, y=2cm]
        \node (L0) at (0,1) {$0$};
        \node (R0) at (1,1) {$0$};
        \node (L1) at (0,0) {$1$};
        \node (R1) at (1,0) {$1$};
        
        \draw[->] (L0) -- node[above,pos=0.5,font=\small] {$1-\alpha_1$} (R0);

        \draw[->] (L0) -- node[pos=0.3,sloped,above,font=\small] {$\alpha_1$} (R1);
        
        \draw[->] (L1) -- node[below,pos=0.5,font=\small] {$1-\alpha_2$} (R1);
        
        \draw[->] (L1) -- node[pos=0.25,sloped,above,font=\small] {$\alpha_2$} (R0);
    \end{tikzpicture}

    \caption{Transition diagram for the kernel $Q_{X \mid U'}$ defined in~\eqref{eq:Z-S}.}
    \label{fig:alpha}
\end{figure}

To illustrate the role of the stationarity constraint, we compare $F(D)$ and
$F^\textup{myopic}(D)$ for a binary STIC system with transition kernel
\begin{equation} \label{eq:Z-S}
    Q_{X \mid U'} (x\mid u')
    \coloneq
    \begin{cases}
        1-\alpha_1, & x=0,\ u'=0,\\[-2ex]
        \alpha_1, & x=1,\ u'=0,\\[-2ex]
        \alpha_2, & x=0,\ u'=1,\\[-2ex]
        1-\alpha_2, & x=1,\ u'=1.
    \end{cases}
\end{equation}
The transition probabilities are shown in Figure~\ref{fig:alpha}.
For $(\alpha_1,\alpha_2)=(0.4,0.3)$, Figure~\ref{fig:alpha-plot} plots
$F(D)$ and $F^\textup{myopic}(D)$ over $D\in[0,0.4]$.
The curves show that the relaxation can be strict:
$F^\textup{myopic}(D)$ lies strictly below $F(D)$ for $D\in[0,0.3)$.
Thus, enforcing the stationary action marginal is essential for characterizing the true infinite-horizon rate-cost tradeoff.

\begin{figure}[t]
    \centering
    \includegraphics[width=1.00\linewidth]{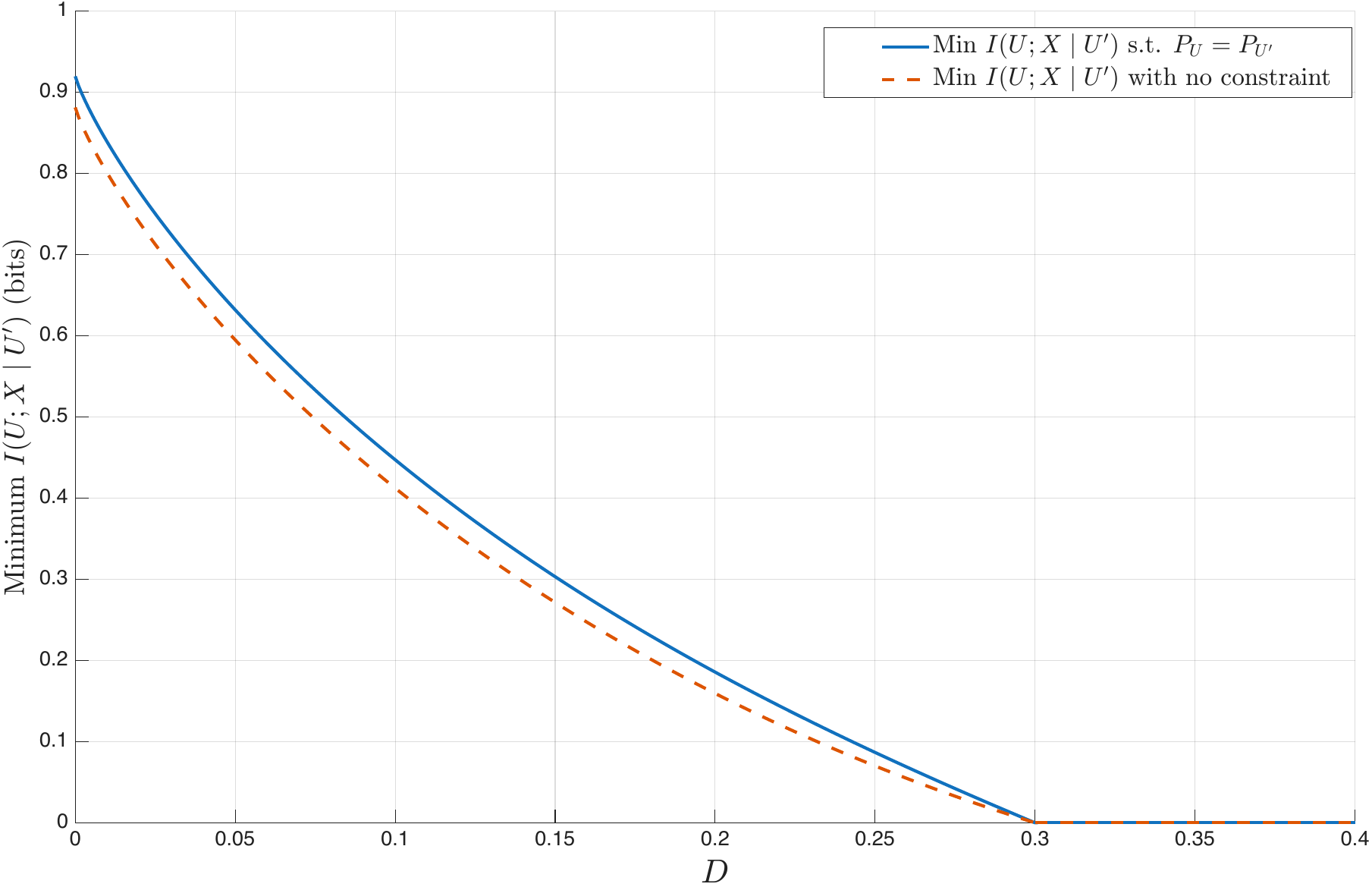}

    \caption{Comparison of the rate-cost function $F(D)$ and its myopic relaxation $F^\textup{myopic}(D)$ for the STIC system defined in~\eqref{eq:Z-S} with parameters $(\alpha_1,\alpha_2)=(0.4,0.3)$.}
    \label{fig:alpha-plot}
\end{figure}

We also observe that $F(D) = F^\textup{myopic}(D) = 0$ for $D\ge0.3$.
Indeed, for $D \ge 0.3$, one may choose the stationary zero-rate policy
$P_{U'}(1)=P_U(1)=1$. Then $U$ is deterministic and independent of $X$
given $U'$, so $I(X;U\mid U') = 0$.
Moreover, since $U'=1$, the state satisfies
$Q_{X\mid U'}(0\mid 1)=\alpha_2=0.3$, and under the Hamming cost this gives
\begin{equation}
    \EE[c(X,U)] = \PP[X\ne U] = \PP[X=0\mid U'=1]=0.3 \le D .
\end{equation}
Thus $F(D)=0$ for $D\ge0.3$. Since
$F^\textup{myopic}(D)\le F(D)$ and both objectives are nonnegative, we also have
$F^\textup{myopic}(D)=0$ for $D\ge0.3$.

\section{Dual STIC and NOST Kernel Families} \label{sec:dual}

Throughout this section, alphabets are $\mc X=\mc U=\{0,1\}$ and the control cost function is $c(x,u)=\mathbf 1\{x\ne u\}$.
For $p\in[0,1]$,
we denote the binary entropy of $p$ by $H(p)$,
which is given by
\begin{equation}
    H(p)
    \coloneq
    - p\log p - (1-p)\log(1-p) ,
\end{equation}
with the convention $0\log 0 = 0$.

We revisit a pair of binary kernel families
that are dual to each other: a family $\mc Q^\textup{STIC}$ of STIC system kernels and a family $\mc Q^\textup{NOST}$ of NOST channel kernels.
These families are closely related to the source--channel matching families studied in~\cite{kourtellaris2014thesis, Kourtellaris2015Nonanticipative}.
The duality is in terms of
the optimizer of one problem inducing a kernel in the other family.
Specifically, for $Q^{(1)}_{X\mid U'}\in\mc Q^\textup{STIC}$, let
$P^{\star(1)}_{U,X,U'}$ denote the optimizer of~\eqref{eq:single-letter}.
Likewise, for $Q^{(2)}_{Y\mid X,Y'}\in\mc Q^\textup{NOST}$, let
$P^{\star(2)}_{X,Y'}$ denote the optimizer of~\eqref{eq:nost-c}.
We show that the two families satisfy
\begin{align}
    Q^{(1)}_{X \mid U'} \in \mc Q^\textup{STIC}
    &\implies
    P^{\star(1)}_{U \mid X,U'} \in \mc Q^\textup{NOST},
    \label{eq:impl-1} \\
    Q^{(2)}_{Y \mid X,Y'} \in \mc Q^\textup{NOST}
    &\implies
    P^{\star(2)}_{X \mid Y'} \in \mc Q^\textup{STIC}.
    \label{eq:impl-2}
\end{align}

\subsection{The kernel families}

First, define $\mc Q^\textup{STIC}$ as the family of binary symmetric STIC
kernels. For $\theta\in(0,1/2)$, let
\begin{equation} \label{eq:def-Q-src}
    Q^{(\theta)}_{X \mid U'}(x | u')
    =
    \begin{cases}
        1-\theta, & x=u',\\
        \theta,   & x\ne u'.
    \end{cases}
\end{equation}
Then
\begin{equation}
    \mc Q^\textup{STIC}
    \coloneq
    \left\{
        Q^{(\theta)}_{X \mid U'} : \theta\in(0,1/2)
    \right\}.
\end{equation}

Next, define $\mc Q^\textup{NOST}$ as follows.
Let $\sigma(t)\coloneq(1+2^{-t})^{-1}$. For $a,b>0$, let
$Q^{(a,b)}_{Y\mid X,Y'}$ denote the binary NOST channel satisfying
\begin{align}
    \PP\!\left[Y=X \mid X=Y'\right] &= \sigma(a+b), \label{eq:def-Q-chn-1} \\
    \PP\!\left[Y=X \mid X\ne Y'\right] &= \sigma(a-b). \label{eq:def-Q-chn-2}
\end{align}
Equivalently,
\begin{equation} \label{eq:def-Q-chn}
    Q^{(a,b)}_{Y \mid X,Y'}(y\mid x,y')
    =
    \frac{
        2^{
            a\,\mathbf 1\{y=x\}
            +
            b\,\mathbf 1\{y=y'\}
        }
    }{
        \sum_{\bar y\in\{0,1\}}
        2^{
            a\,\mathbf 1\{\bar y=x\}
            +
            b\,\mathbf 1\{\bar y=y'\}
        }
    } .
\end{equation}
To see the equivalence in~\eqref{eq:def-Q-chn}, fix $x,y'\in\{0,1\}$.
First suppose that
$x=y'$.
Then, for $y=x=y'$,
\begin{equation}
    a\mathbf 1\{y=x\}+b\mathbf 1\{y=y'\}=a+b,
\end{equation}
whereas for the only other binary output $y\ne x$, we also have
$y\ne y'$, and hence the exponent is $0$.
Therefore
\begin{equation}
    Q^{(a,b)}_{Y\mid X,Y'}(x\mid x,x)
    =
    \frac{2^{a+b}}{2^{a+b}+1}
    =
    \frac{1}{1+2^{-(a+b)}}
    =
    \sigma(a+b).
\end{equation}
Now suppose that $x\ne y'$.
Since the output alphabet is binary,
the two possible outputs are $y=x$ and $y=y'$.
If $y=x$, then $y\ne y'$, so the exponent is $a$.
If $y=y'$, then $y\ne x$, so the exponent is $b$.
Hence
\begin{equation}
    Q^{(a,b)}_{Y\mid X,Y'}(x\mid x,y')
    =
    \frac{2^a}{2^a+2^b}
    =
    \frac{1}{1+2^{b-a}}
    =
    \frac{1}{1+2^{-(a-b)}}
    =
    \sigma(a-b).
\end{equation}
Thus the normalized exponential form in \eqref{eq:def-Q-chn} is equivalent
to the two defining transition probabilities in~\eqref{eq:def-Q-chn-1}--\eqref{eq:def-Q-chn-2}.

We define the family $\mc Q^\textup{NOST}$ of NOST channel kernels as
\begin{equation}
    \mc Q^\textup{NOST}
    \coloneq
    \left\{
        Q^{(a,b)}_{Y\mid X,Y'} : a,b>0
    \right\}.
\end{equation}

\subsection{Optimizing kernels}

We now give the optimizer induced by each family.
Consider the STIC kernel $Q^{(\theta)}_{X\mid U'}$ in~\eqref{eq:def-Q-src},
with $\theta\in(0,1/2)$, and fix a cost level $D<\theta$. Define
\begin{equation} \label{eq:zeta}
    \zeta \coloneq \frac{\theta-D}{1-2D}.
\end{equation}
We prove in the \hyperref[app:binary-stic-opt]{Appendix} below that
the optimizer $P^{\star(\theta)}_{U,X,U'}$ of~\eqref{eq:single-letter}
satisfies
\begin{equation} \label{eq:optimizer}
    P^{\star(\theta)}_{U \mid X,U'}(u\mid x,u')
    =
    \frac{
        2^{
            a\,\mathbf 1\{u=x\}
            +
            b\,\mathbf 1\{u=u'\}
        }
    }{
        \sum_{\bar u\in\{0,1\}}
        2^{
            a\,\mathbf 1\{\bar u=x\}
            +
            b\,\mathbf 1\{\bar u=u'\}
        }
    }
\end{equation}
where
\begin{equation}
    a\coloneq\log\frac{1-D}{D},
    \quad
    b\coloneq\log\frac{1-\zeta}{\zeta}.
\end{equation}
Since $a,b>0$, this conditional kernel belongs to $\mc Q^\textup{NOST}$.
Thus,
\begin{equation}
    Q^{(\theta)}_{X\mid U'}\in\mc Q^\textup{STIC}
    \implies
    P^{\star(\theta)}_{U\mid X,U'}\in\mc Q^\textup{NOST}.
\end{equation}

Conversely, consider the NOST channel $Q^{(a,b)}_{Y\mid X,Y'}$
in~\eqref{eq:def-Q-chn}, with $a,b>0$. The optimizer
$P^{\star(a,b)}_{X,Y'}$ of~\eqref{eq:nost-c} satisfies~\cite[Thm.\ 3.48]{kourtellaris2014thesis}
\begin{equation} \label{eq:P-star-ab}
    P^{\star(a,b)}_{X \mid Y'}(x\mid y')
    =
    \begin{cases}
        1-\lambda^\star, & x=y',\\
        \lambda^\star,   & x\ne y',
    \end{cases}
\end{equation}
where
\begin{equation}
    \lambda^\star
    \coloneq
    \frac{
        e^\star - 1 + \sigma(a+b)
    }{
        \sigma(a+b)+\sigma(a-b)-1
    },
\end{equation}
and
\begin{equation}
    e^\star
    \coloneq
    \frac{1}{
        1+
        2^{
            \frac{
                H(\sigma(a-b)) - H(\sigma(a+b))
            }{
                \sigma(a+b)+\sigma(a-b)-1
            }
        }
    } .
\end{equation}
Hence $P^{\star(a,b)}_{X\mid Y'}$ is a binary symmetric STIC kernel, and so
\begin{equation}
    Q^{(a,b)}_{Y\mid X,Y'}\in\mc Q^\textup{NOST}
    \implies
    P^{\star(a,b)}_{X\mid Y'}\in\mc Q^\textup{STIC}.
\end{equation}

As a result, $\mc Q^\textup{STIC}$ and $\mc Q^\textup{NOST}$ are dual in the
sense of~\eqref{eq:impl-1}--\eqref{eq:impl-2}.

\section{Conclusion} \label{sec:conc}

We studied static time-invariant control (STIC) systems
defined in~\eqref{eq:static-2},
i.e., control systems in which the next system state depends on the history only through the most recent control action in a time-invariant (stationary) manner.
The communication from the system to the controller is rate-limited;
providing more information about the state generally reduces control cost, whereas lower-rate communication typically increases it.
This tradeoff is governed by the infinite-horizon rate-cost function $F(D)$~\cite{atay2026ratecost} given in~\eqref{eq:F-D}.
We showed that
$F(D)$---defined as a limit of growing-size optimizations---admits a constant-size reduction
by demonstrating that
the multi-letter directed-information minimization reduces to a single-letter convex program.
We derived the reduction via symmetry reduction~\cite[Ex.~4.4]{Boyd2004convex} with respect to cyclic shifts of the time index by a proof methodology analogous to the one developed in~\cite[Sec.~2]{Atay2026PosetMarkov}.
The resulting characterization in Theorem~\ref{thm:single-letter} expresses $F(D)$ as the minimum single-letter conditional mutual information $I \! \left( X;U\mid U' \right)$ under linear constraints.
This provides the first single-letter characterization of the rate-cost function beyond LQG control~\cite{KostinaHassibi2018, KostinaHassibi2019RateCostControl, TanakaEsfahaniMitter2018_LQGMinDI},
and identifies it as the exact control-side counterpart of the NOST channel feedback capacity formula~\cite{NOST, Atay2026PosetMarkov}.

Future directions include seeking single-letter reductions of the rate-cost
functions of more general families of control systems beyond STIC systems.
A particularly natural next step is to study time-invariant controlled Markov
systems satisfying
$P_{X_{t+1}\mid X_{[t]},U_{[t]}}
=
P_{X_{t+1}\mid X_t,U_t}$,
thereby relaxing the STIC assumption by allowing the next state to depend on
the current state in addition to the current action.
In view of the STIC--NOST duality developed in this paper, it would be interesting to
formulate and analyze control-side counterparts of finite-state
channels with memory, such as the Trapdoor and Ising channels, whose feedback
capacities are known in closed form~\cite{trapdoor,Ising}.
When an exact characterization is unavailable, it would
be worthwhile to investigate whether $Q$-graph constructions, which yield
computable single-letter bounds for unifilar finite-state
channels~\cite{Sabag-upper-bound}, admit a control-side analogue that yields
computable bounds on $F(D)$.

Other directions include studying variants of the present setting in which
the decoder has feedforward (or more general side information).
One limitation of our current methodology based on symmetry reduction is that
it requires the alphabets to be finite; it would also be of interest to look
for single-letterizations for more general alphabets.
In addition, since in many rate-limited control settings the system is
distributed over a network, the natural notion of causality is induced by the
network topology rather than by a single time index.
In such networked systems with spatial structure, the state is distributed
over a graph, and the effect of a control action propagates through the
network topology.
To obtain computable information-theoretic performance limits for networked
control problems with graphical structure, it would also be interesting to
extend the discrete-time framework we studied to control systems over general
directed acyclic graphs, analogous to the poset-Markov channel models
developed in~\cite[Sec.~3]{Atay2026PosetMarkov}.

\appendix
\label{app:binary-stic-opt}

We prove the optimizer stated in~\eqref{eq:optimizer} for the binary
symmetric STIC kernel in~\eqref{eq:def-Q-src}.
Let $\theta\in(0,1/2)$ and let $D\in(0,\theta)$.
The case $D=0$ follows by taking the limit $a\to\infty$ below.

\begin{proposition}
    For the kernel $Q^{(\theta)}_{X\mid U'}$ in~\eqref{eq:def-Q-src}
    and $\zeta$ given in~\eqref{eq:zeta},
    the single-letter problem in~\eqref{eq:single-letter} has value
    \begin{equation}
        H(\theta)-H(D),
    \end{equation}
    and an optimizer is induced by~\eqref{eq:optimizer}.
\end{proposition}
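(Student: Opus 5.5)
The plan is to prove a lower bound $I(X;U\mid U')\ge H(\theta)-H(D)$ for every feasible $P_{U,X,U'}$, and then exhibit the kernel~\eqref{eq:optimizer} as a feasible point attaining it.

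\emph{Converse.} Fix any feasible $P_{U,X,U'}$ and write $I(X;U\mid U') = H(X\mid U') - H(X\mid U,U')$. The constraint $P_{X\mid U'}=Q^{(\theta)}_{X\mid U'}$ gives $H(X\mid U')=H(\theta)$ regardless of $P_{U'}$. For the second term, conditioning reduces entropy, so $H(X\mid U,U')\le H(X\mid U) = H(X\oplus U\mid U)\le H(X\oplus U)$. Here $X\oplus U$ is binary with $\PP[X\oplus U=1]=\EE[c(X,U)]\le D<1/2$, so $H(X\oplus U)\le H(D)$. Hence $I(X;U\mid U')\ge H(\theta)-H(D)$. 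Neither the stationarity constraint nor the cost beyond the bound $D<1/2$ is needed here.

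\emph{Achievability.} Rather than verifying~\eqref{eq:optimizer} directly, construct the joint law backwards, as in binary rate-distortion with a test channel.
\begin{itemize}
\item Take $P_{U'}$ uniform and let $U = U'\oplus V$ with $V\sim\mathrm{Bern}(\zeta)$ independent of $U'$, so that $P_U=P_{U'}$ is uniform.
\item Set $X = U\oplus N$ with $N\sim\mathrm{Bern}(D)$ independent of $(U',V)$.
\item Then $X\oplus U' = V\oplus N\sim\mathrm{Bern}(\zeta\ast D)$, independent of $U'$.
\end{itemize}
One checks that $\zeta\ast D = \zeta(1-D)+(1-\zeta)D = D+\zeta(1-2D) = \theta$ by~\eqref{eq:zeta}, so $P_{X\mid U'}=Q^{(\theta)}_{X\mid U'}$. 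Also $\EE[c(X,U)]=\PP[N=1]=D$, and $0<\zeta<1/2$ since $D<\theta<1/2$. For this law, $H(X\mid U,U')=H(N\mid U,U')=H(D)$, because $N$ is independent of $(U,U')$. This gives $I(X;U\mid U')=H(\theta)-H(D)$, matching the converse, so the value is $H(\theta)-H(D)$.

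\emph{Identifying the kernel.} It remains to show that the induced $P_{U\mid X,U'}$ equals~\eqref{eq:optimizer}. By Bayes,
\[
P(u\mid x,u')\propto P(u\mid u')\,P(x\mid u) = \zeta^{\mathbf 1\{u\neq u'\}}(1-\zeta)^{\mathbf 1\{u=u'\}}\,D^{\mathbf 1\{u\neq x\}}(1-D)^{\mathbf 1\{u=x\}}.
\]
Dividing by the $u$-independent factor $\zeta D$ leaves
\[
\Bigl(\tfrac{1-\zeta}{\zeta}\Bigr)^{\mathbf 1\{u=u'\}}\Bigl(\tfrac{1-D}{D}\Bigr)^{\mathbf 1\{u=x\}} = 2^{b\mathbf 1\{u=u'\}+a\mathbf 1\{u=x\}},
\]
which is exactly~\eqref{eq:optimizer} after normalization. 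Since $D,\zeta<1/2$, both $a,b>0$.

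\emph{Main obstacle.} The only step needing care is the achievability construction: one has to guess the backward-channel structure with $U'$ uniform and check that the composition $\zeta\ast D$ reproduces $\theta$ exactly. That is the role of the definition~\eqref{eq:zeta}. The converse is a standard two-line entropy bound.
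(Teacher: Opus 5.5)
Your proposal is correct and follows essentially the same route as the paper: the converse uses $H(X\mid U')=H(\theta)$ and bounds $H(X\mid U,U')$ by the binary entropy of the error indicator $\mathbf 1\{X\ne U\}$, and the achievability takes $U'$ uniform, a binary symmetric step with crossover $\zeta$ from $U'$ to $U$ and one with crossover $D$ from $U$ to $X$, followed by the same Bayes computation of~\eqref{eq:optimizer}. Your XOR-noise phrasing has the small advantage of showing explicitly that $X\oplus U'$ is independent of $U'$, so the full conditional $P_{X\mid U'}$ matches $Q^{(\theta)}_{X\mid U'}$ and not just $\PP[X\ne U']$; note also that $D>0$ is needed for $a$ to be finite, as the paper assumes.
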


\begin{proof}
    Let $P_{U,X,U'}$ be any feasible distribution in~\eqref{eq:single-letter},
    and set
    \begin{equation}
        \Delta\coloneq \mathbf 1\{X\ne U\}.
    \end{equation}
    Since $P_{X\mid U'}=Q^{(\theta)}_{X\mid U'}$, we have
    $H(X\mid U')=H(\theta)$. Therefore,
    \begin{align}
        I(X;U\mid U')
        &= H(X\mid U')-H(X\mid U,U') \notag \\
        &= H(\theta)-H(\Delta\mid U,U') \notag \\
        &\ge H(\theta)-H(\Delta). \label{eq:app-lb-1}
    \end{align}
    If the expected cost is at most $D$, then
    $\PP[\Delta=1]\le D$. Since $D<1/2$ and the binary entropy function is
    increasing on $[0,1/2]$,
    \begin{equation}
        H(\Delta)\le H(D).
    \end{equation}
    Therefore, every feasible distribution satisfies
    \begin{equation} \label{eq:app-lb-2}
        I(X;U\mid U') \ge H(\theta)-H(D).
    \end{equation}

    It remains to construct a feasible distribution attaining this lower
    bound. Since $D<\theta<1/2$, we have that $\zeta$ defined in~\eqref{eq:zeta} satisfies $\zeta\in(0,1/2)$. Let
    \begin{equation}
        P^\star_{U'}(u')=\frac12,
    \end{equation}
    and define
    \begin{equation}
        P^\star_{U\mid U'}(u\mid u')
        =
        \begin{cases}
            1-\zeta, & u=u',\\
            \zeta,   & u\ne u',
        \end{cases}
        \qquad
        P^\star_{X\mid U}(x\mid u)
        =
        \begin{cases}
            1-D, & x=u,\\
            D,   & x\ne u.
        \end{cases}
    \end{equation}
    Let
    \begin{equation}
        P^\star_{U',U,X}
        =
        P^\star_{U'} P^\star_{U\mid U'} P^\star_{X\mid U}.
    \end{equation}
    Then $P^\star_U=P^\star_{U'}$ because $P^\star_{U'}$ is uniform and
    $P^\star_{U\mid U'}$ is a binary symmetric kernel. Moreover,
    \begin{align}
        \PP^\star[X\ne U']
        &= \PP^\star[U\ne U'] \cdot \PP^\star[X=U]
           +
           \PP^\star[U=U'] \cdot \PP^\star[X\ne U] \notag \\
        &= \zeta(1-D)+(1-\zeta)D \notag \\
        &= D+\zeta(1-2D) \notag \\
        &= \theta.
    \end{align}
    This implies
    \begin{equation}
        P^\star_{X\mid U'}=Q^{(\theta)}_{X\mid U'}.
    \end{equation}
    Moreover,
    \begin{equation}
        \EE^\star[c(X,U)]
        =
        \PP^\star[X\ne U]
        =
        D.
    \end{equation}
    Thus, $P^\star_{U,X,U'}$ is feasible.

    Since $X$ is conditionally independent of $U'$ given $U$ under
    $P^\star$, and $P^\star_{X\mid U}$ is a binary symmetric kernel with
    crossover probability $D$,
    we have
    \begin{align}
        I^\star(X;U\mid U')
        &= H^\star(X\mid U')-H^\star(X\mid U,U') \notag \\
        &= H(\theta)-H(D).
    \end{align}
    This matches the lower bound in~\eqref{eq:app-lb-2}, so
    $P^\star_{U,X,U'}$ is optimal.

    Finally, we compute the induced conditional distribution
    $P^{\star(\theta)}_{U\mid X,U'}$. By Bayes' rule,
    \begin{align}
        P^{\star(\theta)}_{U\mid X,U'}(u\mid x,u')
        &=
        \frac{
            P^\star_{X\mid U}(x\mid u)
            P^\star_{U\mid U'}(u\mid u')
        }{
            \sum_{\bar u\in\{0,1\}}
            P^\star_{X\mid U}(x\mid \bar u)
            P^\star_{U\mid U'}(\bar u\mid u')
        } .
    \end{align}
    Since
    \begin{equation}
        P^\star_{X\mid U}(x\mid u)
        =
        D\,2^{
            a\,\mathbf 1\{u=x\}
        },
        \qquad
        P^\star_{U\mid U'}(u\mid u')
        =
        \zeta\,2^{
            b\,\mathbf 1\{u=u'\}
        },
    \end{equation}
    where
    \begin{equation}
        a=\log\frac{1-D}{D},
        \qquad
        b=\log\frac{1-\zeta}{\zeta},
    \end{equation}
    the constants $D$ and $\zeta$ in the numerator and denominator cancel out.
    As a result,
    \begin{equation}
        P^{\star(\theta)}_{U \mid X,U'}(u\mid x,u')
        =
        \frac{
            2^{
                a\,\mathbf 1\{u=x\}
                +
                b\,\mathbf 1\{u=u'\}
            }
        }{
            \sum_{\bar u\in\{0,1\}}
            2^{
                a\,\mathbf 1\{\bar u=x\}
                +
                b\,\mathbf 1\{\bar u=u'\}
            }
        },
    \end{equation}
    which is exactly~\eqref{eq:optimizer}. Since
    $D,\zeta\!\in\!(0,1/2)$, we have $a,b\!>\!0$, and thus
    $P^{\star(\theta)}_{U\mid X,U'}\!\in\!\mc Q^\textup{NOST}$.
\end{proof}

\bibliographystyle{IEEEtran}
\bibliography{references}

\end{document}